\newtheorem{theorem}{Theorem}
\newtheorem{corollary}[theorem]{Corollary}
\newtheorem{lemma}[theorem]{Lemma}
\newtheorem{proposition}[theorem]{Proposition}
\theoremstyle{definition}
\newtheorem{definition}[theorem]{Definition}
\theoremstyle{remark}
\newtheorem{remark}{Remark}
\newcommand{\Real}{\mathbb{R}}
\newcommand{\Circle}{\mathbb{S}^{1}}
\newcommand{\e}{\mathfrak{exp}}
\newcommand{\id}{\mathrm{id}}
\newcommand{\uo}{u_{0}}
\newcommand{\g}{\mathfrak{g}}
\newcommand{\DiffS}{\mathrm{Diff}^{\infty}(\mathbb{S}^{1})}
\newcommand{\Diff}[1]{\mathrm{Diff}^{#1}(\mathbb{S}^{1})}
\newcommand{\VectS}{\mathrm{Vect}^{\infty}(\mathbb{S}^{1})}
\newcommand{\VectR}{\mathrm{Vect}^{\infty}(\Real)}
\newcommand{\DiffR}{\mathrm{Diff}^{\infty}(\Real)}
\newcommand{\CS}{\mathrm{C}^{\infty}(\mathbb{S}^{1})}
\newcommand{\Vect}{\mathrm{Vect}}
\newcommand{\C}[1]{\mathrm{C}^{#1}(\mathbb{S}^{1})}
\newcommand{\norm}[1]{\left\Vert#1\right\Vert}
\newcommand{\set}[1]{\left\{#1\right\}}
\begin{document}

\title[non-metric Euler equation]{The Degasperis-Procesi equation as a non-metric Euler equation}%

\author[J. Escher]{Joachim Escher} %
\address{Institute for Applied Mathematics, University of Hannover, D-30167 Hannover, Germany} %
\email{escher@ifam.uni-hannover.de} %

\author[B. Kolev]{Boris Kolev} %
\address{CMI, 39 rue F. Joliot-Curie, 13453 Marseille cedex 13, France} %
\email{kolev@cmi.univ-mrs.fr}%

\subjclass[2000]{35Q53, 58D05} 
\keywords{Euler equation, diffeomorphisms group of the circle, Degasperis-Procesi equation} %



\begin{abstract}
In this paper we present a geometric interpretation of the periodic Degasperis-Procesi equation
as the geodesic flow of a right invariant symmetric linear connection on the diffeomorphism group of the circle. We also show that for any evolution in the family of $b$-equations there is neither gain nor loss of the spatial regularity of solutions. This in turn allows us to view the Degasperis-Procesi and the Camassa-Holm equation as an ODE on the Fr\'{e}chet space of all smooth
functions on the circle.
\end{abstract}

\maketitle


\section{Introduction}
\label{sec:intro}

Due to the highly involved structure of the full governing equations for the classical water wave problem, it is intriguing to approximate them by mathematically easier models. In the shallow-water medium-amplitude regime, introduced to capture stronger nonlinear effects than in the small-amplitude regime (which leads to the famous Korteweg-de Vries equation), the Camassa-Holm (CH) equation~\cite{CH93} and the Degasperis-Procesi (DP) equation \cite{DP99} attracted a lot of attention due to their integrable structure as infinite bi-Hamiltonian systems \cite{CGI06,DHH02} and to the fact that their solitary wave solutions are solitons.
Moreover it is known that this medium-amplitude regime allows for breaking waves, cf. \cite{CL09,Joh02, CE98b}

Both of these equations are members of the so-called family of `$b$-equations' \cite{DHH02,HW03}
\begin{equation}\label{eq:b-equation}
    m_{t} = - (m_{x}u + bmu_{x})
\end{equation}
where
\begin{equation*}
    m = u - u_{xx} .
\end{equation*}
For $b=2$ it is the Camassa-Holm equation (CH)
\begin{equation}\label{eq:CH}
    u_{t} - u_{txx} + 3uu_{x} - 2u_{x}u_{xx} - uu_{xxx} = 0
\end{equation}
and for $b=3$ it is the Degasperis-Procesi equation (DP)
\begin{equation}\label{eq:DP}
    u_{t} - u_{txx} + 4uu_{x} - 3u_{x}u_{xx} - uu_{xxx} = 0 .
\end{equation}

The hydrodynamic relevance of the $b$-equations is described e.g. in~\cite{Joh03,Iva07}. It is integrable only for $b = 2$ and $b = 3$ (see \cite{HW03,MN02,Iva05}). The periodic Camassa-Holm equation is known to correspond to a \emph{geodesic flow on the diffeomorphism group of the circle}, cf. \cite{Kou99}. Local existence of the geodesics and properties of the Riemannian exponential map were studied in \cite{CK02, CK03}. The main objective of the present paper is to extend this work to the whole family of b-equations and in particular for (DP).

\medskip

Analogous to the Camassa-Holm equation, see \cite{CE98, RB01}, the Cauchy problem for the $b$-equation is locally well-posed in the Sobolev space $H^s$ for any $s>3/2$ and the solution depends continuously on the initial data, cf. \cite{ELY06b, EY08}. The geometric approach of the present paper enables us to complete this picture by showing that the $b$-equation is well-posed in the smooth category $C^\infty(J,C^\infty(\Circle))$, where $J$ is an open interval containing $0$. The situation of analytic initial data and the question of analyticity of the corresponding solution will be addressed in a forthcoming paper.

We already mentioned several common properties of the (CH) and the (DP) equation. Besides the fact that the connection on $\DiffS$ in the geometric re-expression of the (DP) equation is non-metric, it is worth pointing out that the (DP) equation possesses bounded but discontinuous solitons, so-called shock peakons, cf. \cite{E07}. Indeed, it is possible to verify that
\begin{equation*}
    u_c(t,x)=\frac{\sinh\left(x-[x]-{1/2}\right)}{t\cosh\left({1/2}\right)+c\sinh\left({1/2}\right)}, \quad x\in\Real/\mathbb{Z},
\end{equation*}
is for any $c>0$ a weak solution to the (DP) equation. Note that the (CH) equation has no discontinuous weak solutions, due to the fact that the $H^1$-norm of any weak solution is preserved.

Another significant difference becomes clear by comparing the isospectral problem in the Lax pair for both equations. Letting again $m = u-u_{xx}$, we have the third order equation
\begin{equation*}
    \psi_x - \psi_{xxx} - \lambda m \psi = 0
\end{equation*}
as the isospectral problem for the (DP) equation (cf. \cite{DHH02}), while in the case of the (CH) equation, the second order equation
\begin{equation*}
    \psi_{xx} - \frac{1}{4}\psi - \lambda m \psi = 0
\end{equation*}
is the relevant spectral problem in the Lax pair, cf. \cite{CH93}.

\medskip

The paper is organized as follows. In Section~\ref{sec:non-metric_Euler_equation}, we introduce the notion of non-metric Euler equation. Some basic material about the diffeomorphism group $\DiffS$ is recalled in Section~\ref{sec:diffeomorphisms_group}. In Section~\ref{sec:b-equation}, we show that the $b$-equation may be recast as the geodesic flow of a symmetric linear connection on $\DiffS$ and that the corresponding second order vector field is smooth. In Section~\ref{sec:short_time_existence}, we show that there is neither gain nor loss of the spatial regularity of solutions and we establish short time existence and smooth dependence on the initial data for this geodesic flow in the smooth category. In Section~\ref{sec:exponential_map}, we show that the exponential map of this symmetric linear connection is a smooth local diffeomorphism. In the last Section~\ref{sec:non-periodic_case}, we discuss the non-periodic case.


\section{Non-metric Euler equations}
\label{sec:non-metric_Euler_equation}

The idea of studying geodesic flows in order to analyze the motion of hydrodynamical systems is attributed to Arnold \cite{Arn66} (despite the fact that a previous short paper from Moreau in 1959 already discussed the problem \cite{Mor59}). He pointed out that the Euler equations of the motion of a rigid body and the Euler equations of hydrodynamics (with fixed boundary) could both be recast as the geodesic equations of a one-sided invariant Riemannian metric on a Lie group.

This structure is the prototype for the mathematical treatment of physical systems which underlying configuration space can be identified with a Lie group $G$. The general Euler equation was derived initially for the \emph{Levi-Civita connection} of a one-sided invariant \emph{Riemannian metric} on a Lie group $G$ (see \cite{Arn66} or \cite{AK98}) but the theory is valid in the more general setting of a one-sided invariant \emph{linear connection} on $G$, as we shall explain now.

A \emph{linear connection} (\emph{covariant derivative}) on a manifold $M$ is a bilinear map
\begin{equation*}
    \Vect(M)\times\Vect(M) \to \Vect(M),
\end{equation*}
denoted by $(X,Y) \mapsto \nabla_{X}Y$, such that:
\begin{equation*}
    \nabla_{fX}Y = f\, \nabla_{X}Y,
\end{equation*}
and
\begin{equation*}
    \nabla_{X}(fY) = (X\cdot f)Y + f \, (\nabla_{X}Y),
\end{equation*}
for all $X,Y \in \Vect(M)$ and $f\in C^{\infty}(M)$. The linear connection $\nabla$ is \emph{symmetric} if moreover
\begin{equation*}
    \nabla_{X}Y - \nabla_{Y}X = [X,Y]
\end{equation*}
for all $X,Y \in \Vect(M)$, where $[X,Y]$ is the usual bracket of vector fields.

Given a local chart $(x^{j})$ (or more generally, any local frame), a linear connection $\nabla$ is completely defined by the \emph{Christoffel symbols}
\begin{equation*}
    \Gamma_{ij}^{k} = \left( \nabla_{\partial_{i}}(\partial_{j})\right)^{k}.
\end{equation*}

A covariant derivative $\nabla$ can be uniquely extended for smooth vector fields $X(t)$ defined only along a smooth path $x(t)$ on $M$ (see \cite{Lan99} for instance). This operator, denoted by ${D}/{Dt}$, is an extension of $\nabla$ in the sense that if the vector field $X(t)$ is the restriction to the curve $x(t)$ of a globally defined vector field $X$, that is $X(t) = X(x(t))$ then we have
\begin{equation*}
    \frac{DX}{Dt}(t) = \left(\nabla_{\dot{x}}\, X\right) (x(t)).
\end{equation*}
In a local chart $(x^{j})$, ${D}/{Dt}$ is defined by
\begin{equation*}
    \left(\frac{DX}{Dt}\right)^{k} = \dot{X}^{k}  + \Gamma_{ij}^{k} \dot{x}^{i}X^{j}.
\end{equation*}

\begin{definition}
A \emph{geodesic} for the linear connection $\nabla$ is a smooth curve $x(t)$ on $M$ such that
\begin{equation*}
    \frac{D\dot{x}}{Dt} = 0.
\end{equation*}
Given a local chart $(x^{j})$ of $M$, the smooth curve $x(t)$ is a geodesic iff it satisfies the following second order differential equation
\begin{equation*}
    \ddot{x}^{k}+\Gamma_{ij}^{k}\dot{x}^{i}\dot{x}^{j} = 0.
\end{equation*}
\end{definition}

\begin{remark}
If two linear connections differ by a skew-symmetric tensor, they have the same geodesics.
\end{remark}

Let $\varphi$ be a diffeomorphism of $M$. We say that the linear connection $\nabla$ is \emph{invariant} under $\varphi$ if
\begin{equation*}
    \varphi^{*} \left(\nabla_{X} Y\right) = \nabla_{\varphi^{*}X}\varphi^{*}Y
\end{equation*}
where $\varphi^{*}X$ is the vector field defined by
\begin{equation*}
    (\varphi^{*}X)(x) = T_{\varphi(x)}\varphi^{-1}X(\varphi(x)).
\end{equation*}

Suppose now that $M$ is a Lie group $G$. Then, we say that a linear connection on $G$ is \emph{right invariant} (resp. \emph{left invariant}) if it is invariant under each right translation $R_{g}$ (resp. left translation $L_{g}$) of $G$. On any Lie group $G$, there is a canonical connection which is both right and left invariant (we say it is \emph{bi-invariant}). It is defined by
\begin{equation*}
    \nabla^{0}_{\xi_{u}}\xi_{v} = \frac{1}{2}[\xi_{u},\xi_{v}],
\end{equation*}
where $\xi_{u}$ is the right invariant vector field on $G$ generated by the vector $u$ in the Lie algebra $\g$ of $G$.

\begin{remark}
This definition does not change if we use \emph{left invariant} vector fields rather than right invariant vector fields in the preceding formula.
\end{remark}

Given a right-invariant linear connection $\nabla$ on $G$, then
\begin{equation*}
    B(X,Y) = \nabla_{X}Y - \nabla^{0}_{X}Y
\end{equation*}
is a right-invariant tensor field on $G$ and is therefore completely defined by its value at the unit element $e$ of $G$, that is by a bilinear map $\g \times \g \to \g$. Conversely, each bilinear operator
\begin{equation*}
    B : \g \times \g \to \g
\end{equation*}
defines uniquely a \emph{right invariant linear connection} on the Lie group $G$, which is given by
\begin{equation}\label{eq:connection}
    \nabla_{\xi_{u}} \, \xi_{v} = \frac{1}{2} [\xi_{u},\xi_{v}] + B(\xi_{u},\xi_{v}),
\end{equation}
where we use the same notation for $B$ and the right invariant tensor field it generates on $G$.

Given a basis $(e_{k})_{1 \le k \le n}$ of the Lie algebra $\g$, let $(\xi_{k})$ be the global right invariant frame on $G$ generated by the basis $(e_{k})$ of $\g$ and let $(\omega^{k})$ be the dual \emph{co-frame} of $(\xi_{k})$ (notice that each $\omega^{k}$ is itself a right invariant 1-form on $G$).

Let $g(t)$ be a smooth path in $G$. We define its \emph{Eulerian velocity} by
\begin{equation*}
    u(t) = R_{g^{-1}(t)}\dot{g}(t)
\end{equation*}
which is a smooth path in the Lie algebra $\g$ of $G$. Let $u^{k}$ be the components of $u$ in the basis $(e_{k})$. Then
\begin{equation*}
    u^{k} = \omega^{k}_{e}(u) = \omega^{k}_{g}(R_{g}u) = \omega^{k}_{g}(\dot{g})
\end{equation*}
are the components of the vector $\dot{g}$ in the frame $(\xi_{k})$. The linear operator ${D}/{Dt}$ defined along $g(t)$ is then given by
\begin{equation*}
    \left(\frac{DX}{Dt}\right)^{k} = \dot{X}^{k}  + \left( \frac{1}{2}c_{ij}^{k} + b_{ij}^{k} \right)u^{i}X^{j}
\end{equation*}
where $c_{ij}^{k}$ are the structure constants of the Lie algebra $\g$ and $b_{ij}^{k}$ are the components of the tensor $B$. In particular, a smooth curve $g(t)$ in $G$ is a geodesic if and only if
\begin{equation*}
    \dot{u}^{k} + b_{ij}^{k} u^{i}u^{j} = 0
\end{equation*}
because $c_{ji}^{k} = - c_{ij}^{k}$ for all $i,j,k$.

\begin{proposition}
A smooth curve $g(t)$ on a Lie group $G$ is a \emph{geodesic} for a \emph{right invariant linear connection} $\nabla$ defined by~\eqref{eq:connection} iff its Eulerian velocity $u = R_{g^{-1}}\dot{g}$ satisfies the first order equation
\begin{equation}\label{eq:euler}
    u_{t} = - B(u,u).
\end{equation}
known as the \emph{Euler equation}.
\end{proposition}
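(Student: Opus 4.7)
The plan is to specialize the general component formula for $D/Dt$ already derived in the excerpt to the case $X(t)=\dot g(t)$, and then read off the resulting ODE in the Lie algebra.

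First, I would observe that under the identification $u^{k}=\omega^{k}_{g}(\dot g)$ the components of the vector field $\dot g(t)$ along the path $g(t)$ in the right invariant frame $(\xi_{k})$ coincide with the components of the Eulerian velocity $u(t)=R_{g^{-1}}\dot g$ in the basis $(e_{k})$ of $\g$. Hence, substituting $X^{k}=u^{k}$ into the formula
\begin{equation*}
    \left(\frac{DX}{Dt}\right)^{k} = \dot{X}^{k} + \left(\tfrac{1}{2}c_{ij}^{k} + b_{ij}^{k}\right) u^{i}X^{j}
\end{equation*}
yields
\begin{equation*}
    \left(\frac{D\dot g}{Dt}\right)^{k} = \dot{u}^{k} + \left(\tfrac{1}{2}c_{ij}^{k} + b_{ij}^{k}\right) u^{i}u^{j}.
\end{equation*}

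Next, I would use the antisymmetry $c_{ji}^{k}=-c_{ij}^{k}$ of the structure constants to conclude that the contraction $c_{ij}^{k}u^{i}u^{j}$ vanishes identically, so that the geodesic condition $D\dot g/Dt=0$ is equivalent componentwise to $\dot u^{k}+b_{ij}^{k}u^{i}u^{j}=0$ for each $k$. Reassembling the components back into the Lie algebra $\g$, the right hand side is by definition $-B(u,u)^{k}$, giving the equivalence with \eqref{eq:euler}.

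There is essentially no serious obstacle here, since the excerpt has already prepared every ingredient; the only point requiring a moment of care is the identification between the components of $\dot g$ with respect to the right invariant frame $(\xi_{k})$ and the components of $u=R_{g^{-1}}\dot g$ in the basis $(e_{k})$, and the cancellation of the $\frac{1}{2}[\xi_{u},\xi_{v}]$ part of the canonical connection $\nabla^{0}$ when one takes the covariant derivative of $\dot g$ along itself. The equivalence is sharp in both directions because the frame $(\xi_{k})$ is global, so the vanishing of all components is equivalent to the vanishing of the Lie algebra valued expression.
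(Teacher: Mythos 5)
Your proposal is correct and follows essentially the same route as the paper: the paper likewise substitutes $X=\dot g$ into the frame expression for $D/Dt$, uses the identification $u^{k}=\omega^{k}_{g}(\dot g)$, and kills the $\tfrac{1}{2}c_{ij}^{k}u^{i}u^{j}$ term by the antisymmetry $c_{ji}^{k}=-c_{ij}^{k}$. No discrepancies to report.
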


In particular, every \emph{quadratic operator} in $\g$ corresponds to the geodesic flow of a \emph{right-invariant symmetric linear connection} on $G$. Notice however that this connection is not necessarily Riemannian --- there may not exist a Riemannian metric which is \emph{preserved} by this connection. We will therefore call such an equation as a \emph{non-metric Euler equation}.

\begin{remark}
In the \emph{metric case}, that is when $\nabla$ is the \emph{Levi-Civita connection} of a right invariant Riemannian metric on $G$, the bilinear operator $B$ is related to the metric by the following formula
\begin{equation*}
    B(u,v) = \frac{1}{2}\Big[ (\mathrm{ad}_{u})^{*}(v) + (\mathrm{ad}_{v})^{*}(u)\Big]
\end{equation*}
where $u,v \in \g$ and $(\mathrm{ad}_{u})^{*}$ is the adjoint (relatively to the given Riemannian metric) of the natural action of the Lie algebra on itself given by
\begin{equation*}
    \mathrm{ad}_{u} : v \mapsto [u, v].
\end{equation*}
\end{remark}


\section{The Fr\'{e}chet Lie group $\DiffS$}
\label{sec:diffeomorphisms_group}

In this paper, we are interested in a (non metric) Euler equation on the infinite dimensional Lie group $\DiffS$ of smooth, orientation preserving diffeomorphisms of the unit circle $\Circle$.

Since the tangent bundle
\begin{equation*}
    T\Circle \simeq \Circle \times \Real
\end{equation*}
is trivial, $\VectS$, the space of smooth vector fields on $\Circle$, can be identified with $\CS$, the space of real smooth functions on $\Circle$.

Given a diffeomorphism $\varphi\in\DiffS$, we define $\varphi_{x}\in\CS$ by the following construction. Let $p:\Real \to \Circle$, $x \mapsto \exp (2i\pi x)$ be the universal cover of the circle. A lift of $\varphi$ is a smooth map $f:\Real \to \Real$ such that
\begin{equation*}
    \varphi(\exp (2i\pi x)) = \exp (2i\pi f(x)).
\end{equation*}
Since $\varphi$ preserves the orientation we have $f(x+1) = f(x) + 1$. A lift is not unique but two lifts of $\varphi$ differ by an integer. For any lift $f$ of $\varphi$, its derivative $f^{\prime}$ is a smooth periodic map on $\Circle$ and we define $\varphi_{x}$ as
\begin{equation*}
    \varphi_{x} : = f^{\prime} ,
\end{equation*}
which is independent of the choice of a particular lift of $\varphi$.

\medskip

The space $\CS$ is a \emph{Fr\'{e}chet space}\footnote{A topological vector space $E$ has a canonical \emph{uniform structure}. When this structure is \emph{complete} and when the topology of $E$ may be given by a countable family of \emph{semi-norms}, we say that $E$ is a \emph{Fr\'{e}chet vector space}. A Fr\'{e}chet space is a Banach space if and only if it is locally bounded, which is not the case of $\CS$.}. A family of semi-norms which defines the topology of $\CS$ is just given by the $C^{n}$-norms. A sequence $u_{k}$ converges to $u$ in $\CS$ if and only if
\begin{equation*}
    \norm{u_{k} - u}_{\C{n}} \to 0
\end{equation*}
as $k \to \infty$ for all $n \ge 0$.

\begin{remark}
In a Fr\'{e}chet space, only \emph{directional} or \emph{G\^{a}teaux} derivative is defined
\begin{equation*}
    Df(x)u = \lim_{\varepsilon \to 0} \frac{1}{\varepsilon}(f(x+\varepsilon u) - f(x)).
\end{equation*}
A map $f: X \to Y$ between Fr\'{e}chet spaces $X,Y$ is \emph{continuously differentiable} ($C^{1}$) on $U\subset X$ if the directional derivative $Df(x)u$ exists for all $x$ in $U$ and
all $u$ in $X$, and the map $(x,u) \mapsto Df(x)u$ is continuous. Note that even in the case where $X$ and $Y$ are Banach spaces this definition of continuous differentiability is weaker than the usual one \cite{Ham82}. Higher derivatives and $C^{n}$ classes in Fr\'{e}chet spaces are defined inductively.
\end{remark}

The group $\DiffS$ is naturally equipped with a \emph{Fr\'{e}chet manifold}\footnote{\emph{Fr\'{e}chet manifolds} are defined as sets which can be covered by charts taking values in a given Fr\'{e}chet space and such that the change of charts are smooth.} structure modeled over the \emph{Fr\'{e}chet vector space} $\CS$. A smooth atlas with only two charts may be constructed as follows (see \cite{GR05}). Given $\varphi \in \DiffS$, it is always possible to find a lift $f:\Real \to \Real$ of $\varphi$ such that either
\begin{equation}\label{eq:lift_1}
    -1/2 < f(0) < 1/2,
\end{equation}
or
\begin{equation}\label{eq:lift_2}
    0 < f(0) < 1 ,
\end{equation}
these conditions being not exclusive. Let $V_{1}$ and $V_{2}$ be the subsets of $\varphi \in \DiffS$ for which there is a lift satisfying respectively \eqref{eq:lift_1} or \eqref{eq:lift_2}. Notice that $V_{1}$ and $V_{2}$ are open and that $V_{1} \cup V_{2} = \DiffS$. For any lift $f$ of $\varphi \in \DiffS$, the function
\begin{equation*}
    u = f - id
\end{equation*}
is $1$-periodic and hence lies in $\CS$ and moreover $u^{\prime}(x) > -1$. Let
\begin{equation*}
    U_{1} = \set{ u \in \CS : \: -1/2 < u(0) <1/2 \: \text{and} \: u^{\prime} > -1 }
\end{equation*}
and
\begin{equation*}
    U_{2} = \set{ u \in \CS : \: 0 < u(0) < 1 \: \text{and} \: u^{\prime} > -1 }.
\end{equation*}
For $j=1,2$, the maps
\begin{equation*}
    \Phi_{j} : U_{j} \to V_{j}, \quad u \mapsto f = \id + u, \qquad j=1,2
\end{equation*}
define charts of $\DiffS$ with values in $\CS$. The change of charts corresponds to a change of lift and is just a translation in $\CS$ by $\pm 1$.

Since the composition and the inverse are smooth maps we say that $\DiffS$ is a \emph{Fr\'{e}chet-Lie group} \cite{Ham82}. In particular, $\DiffS$ is itself \emph{parallelizable}
\begin{equation*}
    T\DiffS \simeq \DiffS \times \CS .
\end{equation*}

The tangent space at the unit element $T_{\id}\DiffS$ --- the \emph{Lie algebra} of $\DiffS$ --- is defined as follows. Let $t \mapsto \varphi(t)$ be a smooth path in $\DiffS$ with $\varphi(0) = \id$. Then
\begin{equation*}
    \varphi_{t}(0,x) \in T_{x}\Circle, \qquad \forall x \in T_{x}\Circle
\end{equation*}
and therefore $\varphi_{t}(0,\cdot)$ is a vector field on $\Circle$. In other words, the Lie algebra of $\DiffS$ corresponds to $\VectS$. The \emph{Lie bracket} on $\VectS \simeq \CS$ is given by\footnote{Notice that this bracket differs from the usual bracket of vector fields by a sign.}
\begin{equation*}
    [u,v] = u_{x}v - uv_{x} .
\end{equation*}

\begin{remark}
For $n\ge 1$, we define the group $\Diff{n}$ of orientation-preserving diffeomorphisms of class $C^{n}$. This group is equipped with a smooth \emph{Banach manifold} structure modeled on the Banach vector space $\C{n}$. However, $\Diff{n}$ is only a \emph{topological group} and not a \emph{Banach Lie group} since the composition and the inverse are continuous but not differentiable \cite{EM70}.
\end{remark}


\section{The b-equation as a non-metric Euler equation on $\DiffS$}
\label{sec:b-equation}

We can recast equation~\eqref{eq:b-equation} as
\begin{equation}\label{eq:b-equation_euler}
    u_{t} = -A^{-1}\left[ u(Au)_{x} + b(Au)u_{x}  \right],
\end{equation}
where
\begin{equation*}
    A = I - \frac{d^{2}}{dx^{2}}
\end{equation*}
is an invertible, linear, differential operator on $\CS$. As a quadratic evolution equation on the Lie algebra $\VectS$, each member of the $b$-family corresponds to a the \emph{Euler equation} of a \emph{right invariant symmetric linear connection} on $\DiffS$.

On a Fr\'{e}chet manifold, only covariant derivatives along curves are meaningful. In the present case, the covariant derivative of the vector field
\begin{equation*}
    \xi(t) = \big(\varphi(t),w(t)\big)
\end{equation*}
along the curve $\varphi(t) \in \DiffS$ is defined as
\begin{equation*}
    \frac{D\xi}{Dt}(t) = \left( \varphi(t), w_{t} + \frac{1}{2}[u(t),w(t)] + B\big(u(t),w(t)\big) \right)
\end{equation*}
where $u(t) = \varphi_{t}\circ \varphi^{-1}$ and
\begin{equation*}
    B(u,w) = \frac{1}{2} A^{-1}\left[ u(Aw)_{x} + w(Au)_{x} + b(Au)w_{x} + b(Aw)u_{x}  \right].
\end{equation*}

This connection is Riemannian \emph{only} for $b=2$ (see \cite{Kol09}), that is for (CH). In that case it corresponds to the Levi-Civita connection of the right invariant metric on $\DiffS$ generated by the $H^{1}$ inner product on $\CS$
\begin{equation*}
    <u,v>_{H^{1}} = \int_{\Circle} \left( uv + u_{x}v_{x} \right)\, dx  = \int_{\Circle} u A(v)\, dx
\end{equation*}
where $u,v \in \CS$ and the operator
\begin{equation*}
    B(u,u) = A^{-1}\left[ (Au_{x})u + 2(Au)u_{x} \right]
\end{equation*}
is defined as the $H^{1}$-adjoint of the natural action $\mathrm{ad}_{u}v = u_{x}v - uv_{x}$. That is
\begin{equation*}
    \int_{\Circle} (u_{x}v - uv_{x}) A(w) \, dx = \int_{\Circle} B(w,u) A(v),
\end{equation*}
where $u,v,w \in \CS$.

Euler equation~\eqref{eq:b-equation_euler} is not an \emph{ordinary differential equation} (ODE) on $\C{n}$ because of the term $(Au)_{x}$ which is not regularized by the operator $A^{-1}$ of order $-2$ (the right hand side of \eqref{eq:b-equation_euler} does not belong to $\C{n}$). As we shall show now, the introduction of \emph{Lagrangian coordinates} allows to rewrite equation~\eqref{eq:b-equation_euler} as a second order vector field on the Banach manifold $\Diff{n}$ for $n \ge 2$.

\medskip

Let $J$ be an open interval in $\Real$ and $u$ be a time dependent vector field on $J \times \Circle$ of class $C^{n}$ ($n\ge 1$). Then, the flow $\varphi$ of $u$ is defined on $J \times \Circle$ and is of class $C^{n}$ as well as $\varphi_{t} = u \circ \varphi$. Conversely, let $(\varphi,v)$ be defined, of class $C^{n}$ on $J \times \Circle$ and such that $\varphi(t)$ is a diffeomorphism for all $t\in J$. Then $\varphi^{-1}$ is of class $C^{n}$ on $J \times \Circle$ and similarly for $u = v \circ \varphi^{-1}$.

\begin{proposition}
The $C^{3}$ time dependent vector field $u$ is a solution of \eqref{eq:b-equation_euler} if and only if $(\varphi,v)$ is solution of
\begin{equation}\label{eq:b-equation_geodesic}
    \left\{%
\begin{array}{ll}
    \varphi_{t} & = v, \\
    v_{t} & = - P_{\varphi}(v), \\
\end{array}%
\right.
\end{equation}
where
\begin{equation*}
    P_{\varphi} = R_{\varphi} \circ P \circ R_{\varphi^{-1}}
\end{equation*}
and
\begin{equation*}
 P(u) = A^{-1}\left[ 3u_{x}u_{xx} + b(Au)u_{x}  \right]
\end{equation*}
\end{proposition}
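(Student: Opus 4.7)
The plan is to verify the equivalence by a direct chain-rule computation, whose crux is a single algebraic identity that cancels the non-regularizing term $u(Au)_x$ against a kinematic term $uu_x$ produced by differentiating along the flow.

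First I would pass to the Lagrangian picture. Setting $v=u\circ\varphi=\varphi_t$ and applying the chain rule in $t$ yields
\[
v_t = u_t\circ\varphi + (u_x\circ\varphi)\,\varphi_t = (u_t + uu_x)\circ\varphi.
\]
Because $u$ is of class $C^3$ in $x$ and $\varphi(t,\cdot)$ is a $C^3$-diffeomorphism of $\Circle$, the relations $v=u\circ\varphi$ and $u=v\circ\varphi^{-1}$ transfer regularity in both directions (as already recalled in the paragraph preceding the proposition), so the two formulations are genuinely being compared at the same level of smoothness.

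The heart of the argument is the pointwise identity
\[
u(Au)_x = A(uu_x) + 3u_x u_{xx}.
\]
This will follow immediately by expanding $A(uu_x)=uu_x-(uu_x)_{xx}$ with $(uu_x)_{xx}=3u_xu_{xx}+uu_{xxx}$, and observing that $u(Au)_x=uu_x-uu_{xxx}$, so that the two $uu_{xxx}$-contributions cancel. Substituting this into \eqref{eq:b-equation_euler} and applying $A^{-1}$ converts the Euler equation into
\[
u_t + uu_x = -A^{-1}\bigl[3u_xu_{xx} + b(Au)u_x\bigr] = -P(u).
\]
The significance of this rearrangement is that whereas the original right-hand side of \eqref{eq:b-equation_euler} loses regularity through $A^{-1}[u(Au)_x]$ (which involves $u_{xxx}$), in the rewritten form $A^{-1}$ is applied only to expressions of spatial order at most two in $u$; hence $P(u)\in C^n$ whenever $u\in C^n$ with $n\ge 2$, which is precisely what makes \eqref{eq:b-equation_geodesic} a genuine second-order ODE on a space of $C^n$ objects.

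Combining the two steps gives $v_t = -P(u)\circ\varphi = -(R_\varphi\circ P\circ R_{\varphi^{-1}})(v) = -P_\varphi(v)$, using $u = R_{\varphi^{-1}}(v)$. The converse direction is obtained by reading the calculation backwards: from $\varphi_t=v$ and $v_t=-P_\varphi(v)$, setting $u=v\circ\varphi^{-1}$ recovers a solution of \eqref{eq:b-equation_euler}. I expect the only real obstacle to be spotting the cancellation identity at the right moment; once the decomposition $u(Au)_x = A(uu_x)+3u_xu_{xx}$ is identified, everything else is routine bookkeeping and the application of the chain rule.
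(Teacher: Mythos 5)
Your proposal is correct and follows essentially the same route as the paper: pass to Lagrangian variables via $v_t=(u_t+uu_x)\circ\varphi$, then exploit the cancellation $u(Au)_x - A(uu_x) = 3u_xu_{xx}$ to obtain $u_t+uu_x=-P(u)$, which is exactly the computation in the paper's proof. The identity you isolate is verified correctly, so nothing further is needed.
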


\begin{proof}
Let $u$ be a time dependent vector field on $J \times \Circle$ and $\varphi$ its flow. Let $v := \varphi_{t} = u \circ \varphi$, then $v_{t} = (u_{t} + uu_{x}) \circ \varphi$. Therefore, $u$ is a solution of equation~\eqref{eq:b-equation_euler} if and only if
\begin{align*}
    u_{t} + uu_{x} & = -A^{-1}\left[ u(Au)_{x} - A(uu_{x}) + b(Au)u_{x}  \right] \\
		& = -A^{-1}\left[ 3u_{x}u_{xx} + b(Au)u_{x}  \right].
\end{align*}
That is $u$ is a solution of equation~\eqref{eq:b-equation_euler} if and only if $(\varphi,v)$ is a solution of equation~\eqref{eq:b-equation_geodesic}.
\end{proof}

Since $P$ is a pseudo-differential operator of order $0$, the mapping
\begin{equation*}
    F(\varphi,v) = \Big( v, P_{\varphi}(v) \Big)
\end{equation*}
sends $\Diff{n} \times \C{n}$ into $\C{n} \times \C{n}$. Notice however, that since the trivialization $T\Diff{n} \simeq \Diff{n} \times \C{n}$ is only \emph{topological} and not \emph{smooth}, equation~\eqref{eq:b-equation_geodesic} is meaningful only \emph{locally} (strictly speaking, $\varphi$ should be replaced by its lift $f$ in this expression). In any case, $F$ is a well-defined \emph{second order vector field} on the Banach manifold $\Diff{n}$.

\begin{remark}\label{rem:equivariance}
By its very definition, $F$ is \emph{equivariant} by the right action of $\Diff{n}$ on $\Diff{n} \times \C{n}$. That is
\begin{equation}\label{eq:equivariance}
    F(\varphi \circ \psi, v \circ \psi) = F(\varphi,v) \circ \psi
\end{equation}
for all $(\varphi,v) \in \Diff{n} \times \C{n}$ and $\psi \in \Diff{n}$.
\end{remark}

\begin{theorem}\label{thm:smoothness}
The vector field
\begin{equation*}
    F : \Diff{n} \times \C{n} \to \C{n} \times \C{n}
\end{equation*}
is smooth for all $n\ge 2$.
\end{theorem}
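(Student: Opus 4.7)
The map $F(\varphi,v) = (v, P_\varphi(v))$ has a linear first component, so smoothness reduces to that of $(\varphi,v) \mapsto P_\varphi(v)$. The plan is to decompose $P = A^{-1}\circ Q$, where
\begin{equation*}
    Q(u) := 3u_x u_{xx} + b(Au)u_x
\end{equation*}
is a differential expression of order two, quadratic in $u$, and to conjugate each factor separately under right translation. Setting
\begin{equation*}
    \tilde A_\varphi := R_\varphi \circ A \circ R_{\varphi^{-1}}, \qquad \tilde Q_\varphi := R_\varphi \circ Q \circ R_{\varphi^{-1}},
\end{equation*}
one has $P_\varphi = \tilde A_\varphi^{-1}\circ \tilde Q_\varphi$, and the chain-rule identities
\begin{equation*}
    u_x\circ\varphi = \frac{v_x}{\varphi_x}, \qquad u_{xx}\circ\varphi = \frac{v_{xx}}{\varphi_x^{2}} - \frac{v_x\,\varphi_{xx}}{\varphi_x^{3}}
\end{equation*}
(with $v = u\circ\varphi$) realise $\tilde A_\varphi$ and $\tilde Q_\varphi$ as explicit second-order differential operators on $\C{n}$ whose coefficients are polynomial expressions in $v$, its first derivatives, $\varphi_{xx}$ and positive powers of $1/\varphi_x$.

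From these explicit formulas, smoothness of the maps
\begin{equation*}
    (\varphi,v)\mapsto \tilde A_\varphi v, \qquad (\varphi,v)\mapsto \tilde Q_\varphi(v)
\end{equation*}
from $\Diff{n}\times\C{n}$ into $\C{n-2}$ is straightforward: pointwise multiplication is smooth on each $\C{k}$, the assignment $\varphi\mapsto 1/\varphi_x$ is smooth as a map $\Diff{n}\to\C{n-1}$ by inversion in the Banach algebra $\C{n-1}$ (using $\varphi_x > 0$), and $\tilde Q_\varphi$ is moreover quadratic in $v$. The same factorisation through pointwise operations shows that $\varphi\mapsto \tilde A_\varphi$ is smooth as a map into the operator norm of $\mathcal{L}(\C{n},\C{n-2})$.

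Since $A:\C{n}\to\C{n-2}$ is boundedly invertible and $R_\varphi$ is a bijection of $\C{k}$ for each $k$, the operator $\tilde A_\varphi$ is invertible with $\tilde A_\varphi^{-1} = R_\varphi\circ A^{-1}\circ R_{\varphi^{-1}}\in \mathcal{L}(\C{n-2},\C{n})$. The standard smoothness of operator inversion on the open set of invertible operators therefore delivers smoothness of $\varphi\mapsto \tilde A_\varphi^{-1}\in\mathcal{L}(\C{n-2},\C{n})$. Composing, $P_\varphi(v) = \tilde A_\varphi^{-1}(\tilde Q_\varphi(v))$ is a smooth map $\Diff{n}\times\C{n}\to\C{n}$, and the theorem follows.

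The main point to watch is the derivative-counting: $Q$ loses two derivatives while $A^{-1}$ recovers exactly two, so the construction is tight at $n = 2$ and relies crucially on the cancellation of the $uu_x$ term performed in the proof of the preceding proposition. In Eulerian variables the analogous expression on the right-hand side of~\eqref{eq:b-equation_euler} is not even a continuous map of $\C{n}$ to itself because of the unregularised term $(Au)_x$; only after passing to the Lagrangian picture does the problem become a smooth ODE on the Banach manifold $\Diff{n}\times\C{n}$.
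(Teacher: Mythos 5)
Your proof is correct and follows essentially the same route as the paper: the decomposition $P_{\varphi}=\tilde{A}_{\varphi}^{-1}\circ\tilde{Q}_{\varphi}$ with $Q(u)=3u_{x}u_{xx}+b(Au)u_{x}$, the explicit chain-rule formulas (which are exactly the content of the paper's Lemma~\ref{lem:rationality} for order $r=2$), and the Banach-algebra structure of the spaces $\C{k}$. The only cosmetic difference is the final step: you obtain smoothness of $\varphi\mapsto\tilde{A}_{\varphi}^{-1}$ from the smoothness of operator inversion on the open set of invertible elements of $\mathcal{L}(\C{n},\C{n-2})$, whereas the paper applies the inverse mapping theorem to the map $(\varphi,v)\mapsto(\varphi,\tilde{A}_{\varphi}v)$ after checking via the open mapping theorem that its derivative is a topological isomorphism; the two arguments are standard and interchangeable here.
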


We cannot conclude directly from the smoothness of $P$ that $F$ is smooth because \emph{neither} the composition nor the inversion are smooth maps on $\Diff{n}$. The Banach manifold $\Diff{n}$ is only a topological group. The main argument in the proof of Theorem~\ref{thm:smoothness} is the following lemma.

\begin{lemma}\label{lem:rationality}
Let $P$ be a polynomial differential operator of order $r$ on $\CS$, with \emph{constant coefficients}. Then
\begin{equation*}
    P_{\varphi} = R_{\varphi} \circ P \circ R_{\varphi^{-1}}
\end{equation*}
is a polynomial differential operator of order $r$ whose coefficients are rational functions of $\varphi_{x}, \dotsc \varphi_{x}^{(r)}$.
\end{lemma}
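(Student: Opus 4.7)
My approach would be to reduce the lemma to the single computation of $R_\varphi\circ\partial_x\circ R_{\varphi^{-1}}$ and then use induction on the order. Since $R_{\varphi^{-1}} = R_\varphi^{-1}$ on the algebra of linear operators, conjugation by $R_\varphi$ is a unital algebra homomorphism and therefore commutes with composition and linear combinations. A polynomial differential operator with constant coefficients is just a real polynomial in $\partial_x$, so once I understand the conjugate of $\partial_x$, the general case follows by substitution.

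\emph{Key computation.} For $u\in\CS$, set $w=u\circ\varphi^{-1}$, so that $u=w\circ\varphi$. The chain rule gives $u_x=(w_x\circ\varphi)\,\varphi_x$, and because $\varphi$ is an orientation-preserving smooth diffeomorphism, $\varphi_x$ is strictly positive, so $1/\varphi_x\in\CS$. Therefore
$$
(R_\varphi\circ\partial_x\circ R_{\varphi^{-1}})(u)=(\partial_x(u\circ\varphi^{-1}))\circ\varphi=\frac{u_x}{\varphi_x},
$$
i.e.\ $R_\varphi\circ\partial_x\circ R_{\varphi^{-1}}=\frac{1}{\varphi_x}\partial_x$ as operators on $\CS$.

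\emph{Induction on the order.} Writing $P=\sum_{k=0}^{r}c_k\partial_x^{k}$ with $c_k\in\Real$ and $c_r\neq 0$, the previous step gives
$$
P_\varphi=\sum_{k=0}^{r}c_k\Bigl(\tfrac{1}{\varphi_x}\partial_x\Bigr)^{k}.
$$
I would prove by induction on $k$ that $(\frac{1}{\varphi_x}\partial_x)^{k}=\sum_{j=0}^{k}a_{k,j}\,\partial_x^{j}$ where $a_{k,k}=\varphi_x^{-k}$ and, for $j<k$, $a_{k,j}$ is a rational function of $\varphi_x,\varphi_x',\dots,\varphi_x^{(k-1)}$ with a power of $\varphi_x$ in the denominator. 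The inductive step is a direct application of the Leibniz rule: left-multiplication by $\frac{1}{\varphi_x}\partial_x$ either differentiates the coefficients $a_{k,j}$ (which introduces derivatives of $\varphi_x$ up to one higher order, while preserving rationality) or shifts the derivative order up, and then division by $\varphi_x$ is absorbed into the rational expression. Summing in $k$ shows that $P_\varphi$ has order exactly $r$, with leading coefficient $c_r\varphi_x^{-r}\neq 0$, and all lower-order coefficients rational in $\varphi_x,\dots,\varphi_x^{(r-1)}$, in particular in $\varphi_x,\dots,\varphi_x^{(r)}$.

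\emph{Main obstacle.} There is no deep obstacle; the argument is essentially careful bookkeeping for the chain rule. The only points I would watch are (i) using the fact that $\varphi_x>0$ everywhere so that the rational expressions lie in $\CS$, and (ii) keeping the induction hypothesis tight enough (leading term exactly $\varphi_x^{-k}\partial_x^{k}$) to guarantee that the conjugate preserves the order, which is what justifies speaking of $P_\varphi$ as an operator of order $r$.
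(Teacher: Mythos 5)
Your key computation is exactly the paper's: the identity $R_\varphi\circ\partial_x\circ R_{\varphi^{-1}}=\frac{1}{\varphi_x}\,\partial_x$ is precisely the recursion $a_1=u_x/\varphi_x$, $a_{k+1}=\frac{1}{\varphi_x}(a_k)'$ that the authors use, and your induction correctly tracks which derivatives of $\varphi_x$ appear and why $\varphi_x>0$ keeps everything in $\CS$. The gap is in the scope of what you prove. You read ``polynomial differential operator with constant coefficients'' as a polynomial in $\partial_x$, i.e.\ a \emph{linear} operator $\sum_k c_k\partial_x^k$. In this paper the phrase means a polynomial expression in $u,u',\dots,u^{(r)}$ with constant coefficients --- the authors' own proof reduces to the monomial $P(u)=u^{p_0}(u')^{p_1}\dotsb(u^{(r)})^{p_r}$ --- and this generality is essential: in the proof of Theorem~\ref{thm:smoothness} the lemma is applied to the \emph{quadratic} operator $Q(u)=3u_xu_{xx}+b(Au)u_x$, which is not of the form $\sum_k c_k\partial_x^k$. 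As written, your argument covers the case needed for $\tilde A$ but not the one needed for $\tilde Q$, so it does not suffice for the way the lemma is used.

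The repair is short and uses only what you already have. Since $R_\varphi$ is an algebra homomorphism for pointwise multiplication, conjugation is multiplicative on outputs: for a monomial, $P_\varphi(u)=u^{p_0}a_1^{p_1}\dotsb a_r^{p_r}$ where $a_k=(u\circ\varphi^{-1})^{(k)}\circ\varphi=\bigl(\frac{1}{\varphi_x}\partial_x\bigr)^k u$. Your induction on $\bigl(\frac{1}{\varphi_x}\partial_x\bigr)^k$ already shows that each $a_k$ is a linear combination of $u',\dots,u^{(k)}$ with coefficients rational in $\varphi_x,\dots,\varphi_x^{(k-1)}$, with denominators powers of the everywhere-positive $\varphi_x$ and with top coefficient $\varphi_x^{-k}$. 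Expanding the product and extending by linearity over monomials then gives the lemma in the generality the paper needs; the order statement also follows since the top coefficient $\varphi_x^{-k}$ of $a_k$ never vanishes.
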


\begin{proof}
It suffices to prove the lemma for a monomial
\begin{equation*}
 P(u) = u^{p_{0}}(u^{\prime})^{p_{1}}\dotsb (u^{(r)})^{p_{r}}.
\end{equation*}
We have
\begin{equation*}
 P_{\varphi}(u) = u^{p_{0}} \left[ (u \circ \varphi^{-1})^{\prime} \circ \varphi \right]^{p_{1}} \dotsb \left[ (u \circ \varphi^{-1})^{(r)}\circ \varphi \right]^{p_{r}}.
\end{equation*}
Let
\begin{equation*}
 a_{k} = (u \circ \varphi^{-1})^{(k)} \circ \varphi , \qquad k = 1, 2, \dotsc
\end{equation*}
Since
\begin{equation*}
 a_{1} = \dfrac{u_{x}}{\varphi_{x}}, \quad \text{and} \quad a_{k+1} = \dfrac{1}{\varphi_{x}}(a_{k})^{\prime}
\end{equation*}
the proof is achieved by a recursive argument.
\end{proof}

\begin{proof}[Proof of Theorem~\ref{thm:smoothness}]
To prove the smoothness of $F$, we will show that for each $n\geq 2$, the operator
\begin{equation*}
    \tilde{P}(\varphi,v) = \Big( \varphi , R_{\varphi} \circ P \circ R_{\varphi^{-1}} (v) \Big)
\end{equation*}
is a smooth map from $\Diff{n} \times \C{n}$ to itself. To do so, we write $\tilde{P}$ as the composition $\tilde{P} = \tilde{A}^{-1} \circ \tilde{Q}$, where
\begin{align*}
    \tilde{A} (\varphi,v) = \Big( \varphi,\,R_{\varphi} \circ A \circ R_{\varphi^{-1}}(v) \Big)
    \intertext{and}
    \tilde{Q} (\varphi,v) = \Big( \varphi,\, R_{\varphi} \circ Q \circ R_{\varphi^{-1}}(v) \Big)
\end{align*}
where $Q$ is the quadratic differential operator defined by
\begin{equation*}
    Q(u) :=  3 u_xu_{xx}+ b (Au)u_x .
\end{equation*}

By virtue of Lemma~\ref{lem:rationality}, both $\tilde{A}$ and $\tilde{Q}$ are smooth maps from $\Diff{n} \times \C{n}$ to $\Diff{n} \times \C{n-2}$, since the spaces $\C{k}$ are Banach algebras.

To show that $\tilde{A}^{-1}: \Diff{n} \times \C{n-2} \to \Diff{n} \times \C{n}$ is smooth, we compute the derivative of $\tilde{A}$ at an arbitrary point
$(\varphi,v)$, obtaining
\begin{equation*}
    D\tilde{A}(\varphi,v) =
    \left( \begin{array}{cc}
        Id & 0 \\
        \ast & R_{\varphi} \circ A \circ R_{\varphi^{-1}}
    \end{array} \right)
\end{equation*}
which is clearly a bounded linear operator from $\C{n} \times \C{n}$ to $\C{n} \times \C{n-2}$ by virtue of Lemma~\ref{lem:rationality}. It is moreover a topological linear isomorphism since it is bijective (\emph{open mapping theorem}). The application of the \emph{inverse mapping theorem} \cite{Lan99} in Banach spaces achieves the proof.
\end{proof}


\section{Short time existence of the geodesics on $\DiffS$}
\label{sec:short_time_existence}

From the smoothness of $F$ and the \emph{Cauchy-Lipschitz theorem} in Banach spaces (see \cite{Lan99} for instance) we deduce existence of integral curves for the vector field $F$ on $\Diff{n} \times \C{n}$.

\begin{proposition}\label{prop:Cauchy_theorem}
Let $n \ge 2$. Then there exists a open interval $J_{n}$ centered at $0$ and an open ball $B_{n}(0,\delta_{n})$ in $\C{n}$ such that for each $\uo \in B(0,\delta_{n})$ there exists a unique solution $(\varphi , v) \in C^{\infty}\big(J_{n},\Diff{n} \times \C{n}\big)$ of \eqref{eq:b-equation_geodesic} such that $\varphi(0) = \id$ and $v(0) = \uo$. Moreover, the flow $(\varphi , v)$ depends smoothly on $(t,\uo)$.
\end{proposition}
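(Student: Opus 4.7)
The plan is to apply the classical Cauchy--Lipschitz theorem in Banach spaces to the smooth second order vector field $F$ provided by Theorem~\ref{thm:smoothness}. The key observation is that all the hard analytical work has already been done: once we know that $F\colon \Diff{n}\times\C{n} \to \C{n}\times\C{n}$ is smooth, Proposition~\ref{prop:Cauchy_theorem} is essentially a corollary of the standard Picard--Lindel\"of theorem on Banach manifolds.

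Concretely, I would first work in the chart $\Phi_{1}\colon U_{1} \to V_{1}$ described in Section~\ref{sec:diffeomorphisms_group}, which identifies an open neighborhood of $\id \in \Diff{n}$ with the open subset $U_{1} \subset \C{n}$. In this chart, $\Diff{n} \times \C{n}$ becomes an open subset of the Banach space $\C{n} \times \C{n}$, and equation~\eqref{eq:b-equation_geodesic} takes the form of a genuine Banach-space ODE
\begin{equation*}
    \frac{d}{dt}(\varphi,v) = F(\varphi,v), \qquad (\varphi(0),v(0)) = (\id,\uo),
\end{equation*}
with smooth, hence locally Lipschitz, right-hand side. The Cauchy--Lipschitz theorem in Banach spaces (see \cite{Lan99}) then produces an open neighborhood $W$ of $(\id,0)$ in $U_{1}\times\C{n}$, an open interval $J_{n}$ centered at $0$, and a continuous local flow $\Psi\colon J_{n}\times W \to U_{1}\times\C{n}$ whose trajectories are the unique integral curves of $F$. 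Shrinking and choosing $\delta_{n}>0$ small enough so that $\{\id\}\times B_{n}(0,\delta_{n}) \subset W$ yields the claimed existence and uniqueness of $(\varphi,v)$ on $J_{n}$.

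It remains to upgrade the regularity. The smooth dependence of $(\varphi,v)$ on $(t,\uo)$ is exactly the content of the general theorem on smooth parameter dependence for ODEs whose right-hand side is of class $C^{k}$: applied for every $k$ using the smoothness of $F$, it yields $\Psi \in C^{\infty}(J_{n}\times W)$, and restriction to the slice $\{\id\}\times B_{n}(0,\delta_{n})$ gives the required smoothness in $(t,\uo)$. For $C^{\infty}$ regularity in $t$ alone, a standard bootstrap suffices: knowing that $(\varphi,v)$ is $C^{1}$ and that $(\varphi_{t},v_{t}) = F(\varphi,v)$ with $F$ smooth, one obtains inductively that $(\varphi,v) \in C^{k}$ for every $k$. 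The only point requiring mild attention is the passage to the chart, since $\Diff{n}$ is a manifold rather than a vector space; but because the charts $\Phi_{j}$ realize $\Diff{n}$ as an open subset of an affine space modelled on $\C{n}$, this step is harmless and all the work has been pushed into Theorem~\ref{thm:smoothness}.
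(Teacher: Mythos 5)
Your proposal is correct and follows essentially the same route as the paper, which likewise obtains Proposition~\ref{prop:Cauchy_theorem} directly from the smoothness of $F$ (Theorem~\ref{thm:smoothness}) together with the Cauchy--Lipschitz theorem in Banach spaces. The extra details you supply about working in the chart $\Phi_{1}$ and bootstrapping regularity in $t$ are consistent with, and merely flesh out, the paper's one-line deduction.
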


Classical results like the \emph{Cauchy-Lipschitz theorem} or the \emph{local inverse theorem} for Banach spaces are no longer valid in general for Fr\'{e}chet spaces. To establish a short time existence result in the $C^{\infty}$ category, it is useful to introduce the notion of \emph{Banach approximation} of Fr\'{e}chet spaces.

\begin{definition}
A \emph{Banach approximation} of a Fr\'{e}chet space $X$ is a sequence of Banach spaces $(X_{n},\norm{{\cdot}}_{n})_{n \ge 0}$ such that
\begin{equation*}
    X_{0} \supseteq X_{1} \supseteq X_{2} \supseteq \dotsb \supseteq X \quad \text{and} \quad X = \bigcap_{n = 0}^{\infty} X_{n}
\end{equation*}
where $\set{ \norm{{\cdot}}_{n} }_{n \ge 0}$ is a sequence of norms inducing the
topology on $X$ and
\begin{equation*}
    \norm{x}_0 \le \norm{x}_1 \le \norm{x}_2 \le \dotsb
\end{equation*}
for any $x$ in $X$.
\end{definition}

For Fr\'{e}chet spaces admitting Banach approximations one has the following lemma.

\begin{lemma}\label{lem:Banach_approximation}
Let $X$ and $Y$ be Fr\'{e}chet spaces with Banach approximations $(X_{n})_{n \ge 0}$, and $(Y_{n})_{n\ge 0}$ respectively. Let $\Phi_{0} : U_{0} \to V_{0}$ be a smooth map between the open subsets $U_{0} \subset X_0$ and $V_{0} \subset Y_{0}$. Let $U = U_{0} \cap X$, $V = V_{0} \cap Y$ and for each $n \ge 0$
\begin{equation*}
    U_{n} = U_{0} \cap X_{n}, \quad V_{n} = V_{0} \cap Y_{n}.
\end{equation*}
Assume that, for each $n \ge 0$, the following properties are satisfied:
\begin{enumerate}
    \item $\Phi_{0}(U_{n}) \subset Y_{n}$,
    \item the restriction $\Phi_{n} := \Phi_{0}\big\arrowvert_{U_{n}} : U_{n} \to Y_{n}$ is a smooth map.
\end{enumerate}
Then one has $\Phi_{0}(U)\subset V$ and the map $\Phi := \Phi_{0} \big\arrowvert_{U} : U \to V$ is smooth.
\end{lemma}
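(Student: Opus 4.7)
The plan is first to verify the set-theoretic inclusion $\Phi_{0}(U) \subset V$, then to establish Fr\'{e}chet smoothness of $\Phi$ by bootstrapping from the Banach-level smoothness of each $\Phi_{n}$, and finally to iterate the argument to obtain $C^{k}$ for every $k$.

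For the inclusion, any $x \in U$ lies in $U_{n} = U_{0} \cap X_{n}$ for every $n$, so hypothesis (1) gives $\Phi_{0}(x) \in Y_{n}$ for every $n$; combined with $\Phi_{0}(x) \in V_{0}$, intersecting over $n$ yields $\Phi_{0}(x) \in V_{0} \cap \bigcap_{n} Y_{n} = V$.

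For the $C^{1}$ property in the sense recalled earlier in the paper, I fix $x \in U$ and $u \in X$. For each $n$, smoothness of $\Phi_{n}$ as a map between Banach spaces implies Fr\'{e}chet (hence G\^{a}teaux) differentiability at $x \in U_{n}$, so $\varepsilon^{-1}(\Phi_{0}(x+\varepsilon u) - \Phi_{0}(x))$ converges in $\norm{\cdot}_{n}$ to $D\Phi_{n}(x)u \in Y_{n}$ as $\varepsilon \to 0$. Because $\norm{\cdot}_{n-1} \le \norm{\cdot}_{n}$, the inclusions $Y_{n} \hookrightarrow Y_{n-1}$ are continuous, so uniqueness of limits in $Y_{0}$ forces the various $D\Phi_{n}(x)u$ to coincide and to define a single vector $D\Phi(x)u \in \bigcap_{n} Y_{n} = Y$. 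Simultaneous convergence of the difference quotient in every seminorm $\norm{\cdot}_{n}$ is exactly convergence in the Fr\'{e}chet topology of $Y$, so $D\Phi(x)u$ is the G\^{a}teaux derivative of $\Phi$. For joint continuity of $(x,u) \mapsto D\Phi(x)u$, a convergent sequence $(x_{k},u_{k}) \to (x,u)$ in $U \times X$ converges in each $\norm{\cdot}_{n}$; the Banach $C^{1}$-continuity of $D\Phi_{n}$ then yields $D\Phi_{n}(x_{k})u_{k} \to D\Phi_{n}(x)u$ in every $Y_{n}$, which is convergence in $Y$.

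For $C^{k}$ with $k \ge 2$, I would rerun the same argument on $D\Phi_{0} : U_{0} \times X_{0} \to Y_{0}$, equipping $X \times X$ with the product Banach approximation $(X_{n} \times X_{n})$: each $D\Phi_{n}$ is smooth as a Banach map since $\Phi_{n}$ is, and its values agree with those of $D\Phi_{0}$ restricted to $U_{n} \times X_{n}$ by the uniqueness step above, so hypotheses (1) and (2) are again satisfied, and induction on $k$ produces all higher derivatives with the required continuity. The only real obstacle is the gluing step inside the $C^{1}$ part: showing that the G\^{a}teaux derivatives computed at different Banach levels coincide, which rests on the continuity of the inclusions $Y_{n} \hookrightarrow Y_{n-1}$ and on uniqueness of limits in the Hausdorff space $Y_{0}$.
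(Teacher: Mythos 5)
Your proof is correct and follows essentially the same route as the paper's (much terser) argument: glue the Banach-level Gâteaux derivatives via the continuous inclusions $Y_{n}\hookrightarrow Y_{n-1}$, observe that convergence in every norm $\norm{\cdot}_{n}$ is convergence in the Fréchet topology, and induct on the order of differentiation. The paper compresses all of this into three sentences; your write-up simply supplies the details it leaves implicit.
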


\begin{proof}
By property (1) we have $\Phi_{0}(U)\subset V$ and $\Phi = \Phi_{0} \big\arrowvert_{U} : U \to V$ is clearly continuous. Property (2) ensures that the \emph{G\^{a}teaux} derivative of $\Phi$ is well-defined for every $u\in U$ and for any direction $w$ in $X$ and that $(u,w) \mapsto D\Phi(u)w$ is continuous. Inductively, property (2) shows that $\Phi$ is of class $C^{n}$ for all $n$, and therefore is a smooth map.
\end{proof}

A remarkable property of equation~\eqref{eq:b-equation} whose interpretation remains mysterious (unless for $b=2$ where it corresponds to the conservation of the momentum for \emph{metric Euler} equations) is the following conservation law.

\begin{proposition}\label{prop:conservation_law}
Let $u$ be a solution of \eqref{eq:b-equation} of class $C^{3}$ on $[0,T]\times \Circle$ and $\varphi$ the flow of the time-dependent vector field $u$. Then we have
\begin{equation*}
 (m(t) \circ \varphi(t))\varphi_{x}^{b}(t) = m(0)
\end{equation*}
for all $t \in [0,T]$.
\end{proposition}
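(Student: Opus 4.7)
The plan is a direct time-differentiation of the quantity $N(t,x) := (m(t)\circ\varphi(t))(x)\,\varphi_x^{\,b}(t,x)$ and a check that $N_t \equiv 0$, whence $N(t)=N(0)=m(0)$ since $\varphi(0)=\id$.

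First I would record the two elementary identities that follow from $\varphi_t = u\circ\varphi$. Differentiating $\varphi_t = u\circ\varphi$ in $x$ gives
\begin{equation*}
    \varphi_{tx} = (u_x\circ\varphi)\,\varphi_x,
\end{equation*}
and by the chain rule,
\begin{equation*}
    \partial_t\bigl(m\circ\varphi\bigr) = (m_t + u\,m_x)\circ\varphi.
\end{equation*}
The $C^{3}$ regularity of $u$ guarantees that $m$, $\varphi_x$ and all the terms appearing here are at least $C^{1}$ in time and continuous in $x$, so all manipulations are legitimate.

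Next I would use the $b$-equation~\eqref{eq:b-equation}, $m_t = -(m_x u + b m u_x)$, to rewrite the material derivative:
\begin{equation*}
    m_t + u\,m_x = -b\,m\,u_x,
\end{equation*}
so that
\begin{equation*}
    \partial_t\bigl(m\circ\varphi\bigr) = -b\bigl((m\,u_x)\circ\varphi\bigr).
\end{equation*}
On the other hand, using $\partial_t\varphi_x = \varphi_{tx} = (u_x\circ\varphi)\varphi_x$,
\begin{equation*}
    \partial_t\bigl(\varphi_x^{\,b}\bigr) = b\,\varphi_x^{\,b-1}\,\varphi_{tx} = b\,(u_x\circ\varphi)\,\varphi_x^{\,b}.
\end{equation*}

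Combining the two via the Leibniz rule:
\begin{equation*}
    N_t = -b\bigl((m\,u_x)\circ\varphi\bigr)\varphi_x^{\,b} + (m\circ\varphi)\,b\,(u_x\circ\varphi)\,\varphi_x^{\,b} = 0.
\end{equation*}
Hence $N$ is independent of $t$, and evaluating at $t=0$ yields the stated identity. There is no real obstacle here; the only thing to verify carefully is that the $C^{3}$ hypothesis on $u$ is enough to ensure $m\circ\varphi$ and $\varphi_x^{\,b}$ are $C^{1}$ in $t$, which is immediate from classical ODE theory for the flow of a $C^{3}$ time-dependent vector field.
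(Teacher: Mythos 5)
Your proposal is correct and coincides with the paper's own proof: both differentiate $(m\circ\varphi)\varphi_{x}^{b}$ in time using $\varphi_{t}=u\circ\varphi$ and $\varphi_{xt}=(u_{x}\circ\varphi)\varphi_{x}$, then substitute the $b$-equation to see the derivative vanishes.
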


\begin{proof}
We have
\begin{equation*}
    \frac{d}{dt}\Big[(m \circ \varphi)\varphi_{x}^{b}\Big] = \varphi_{x}^{b} \Big\{ m_{t} \circ \varphi + (m_{x} \circ \varphi) (u \circ \varphi) + b (m \circ \varphi)(u_{x} \circ \varphi) \Big\}
\end{equation*}
since $\varphi_{t} = u \circ \varphi$ and $\varphi_{xt} = (u_{x} \circ \varphi) \varphi_{x}$. Hence
\begin{equation*}
    \frac{d}{dt}\Big[(m \circ \varphi)\varphi_{x}^{b}\Big] = 0
\end{equation*}
if $m_{t} = - (m_{x}u + bmu_{x})$.
\end{proof}

\begin{lemma}\label{lem:regularity}
Let $(\varphi(t),v(t))$ be a solution of \eqref{eq:b-equation_geodesic} in $\Diff{3}\times \C{3}$ defined on $[0,T]$ with initial data $(\id,\uo)$. Then we have for all $t \in[0,T]$
\begin{equation}\label{eq:regularity_1}
    \varphi_{xx}(t) = \varphi_{x}(t)\left[ \int_{0}^{t} v(s) \varphi_{x}(s) \, ds - m_{0} \int_{0}^{t} \varphi_{x}(s)^{1-b} \, ds \right]
\end{equation}
and
\begin{multline}\label{eq:regularity_2}
    v_{xx}(t) = v_{x}(t)\left[ \int_{0}^{t} v(s) \varphi_{x}(s) \, ds - m_{0} \int_{0}^{t} \varphi_{x}(s)^{1-b} \, ds \right] \\
    + \varphi_{x}(t)\left[ v(t) \varphi_{x}(t) - m_{0} \varphi_{x}(t)^{1-b} \right].
\end{multline}
\end{lemma}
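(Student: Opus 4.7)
Plan of proof. The two identities are really the same identity, expressed in Lagrangian and Eulerian coordinates respectively, and they are both consequences of the conservation law of Proposition~\ref{prop:conservation_law}. The strategy is therefore: (i) use the conservation law to express $m\circ\varphi$ in closed form; (ii) rewrite $m\circ\varphi=(u-u_{xx})\circ\varphi$ in terms of $v,\varphi$ and their spatial derivatives via the chain rule; (iii) solve the resulting algebraic identity for $v_{xx}$; (iv) compute $\frac{d}{dt}\bigl(\varphi_{xx}/\varphi_{x}\bigr)$, integrate from $0$, and obtain \eqref{eq:regularity_1}; (v) substitute \eqref{eq:regularity_1} back into the identity of step (iii) to get \eqref{eq:regularity_2}.

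First, since $(\varphi,v)\in\Diff{3}\times \C{3}$, the Eulerian velocity $u=v\circ\varphi^{-1}$ is of class $C^{3}$, so Proposition~\ref{prop:conservation_law} applies and gives $(m\circ\varphi)\varphi_{x}^{b}=m_{0}$, i.e. $m\circ\varphi=m_{0}\varphi_{x}^{-b}$. Differentiating $v=u\circ\varphi$ twice in $x$ yields
\begin{equation*}
u_{x}\circ\varphi=\frac{v_{x}}{\varphi_{x}},\qquad u_{xx}\circ\varphi=\frac{v_{xx}}{\varphi_{x}^{2}}-\frac{v_{x}\varphi_{xx}}{\varphi_{x}^{3}},
\end{equation*}
so that $m\circ\varphi=v-v_{xx}/\varphi_{x}^{2}+v_{x}\varphi_{xx}/\varphi_{x}^{3}$. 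Combining with the conservation law and multiplying by $\varphi_{x}^{2}$ gives the key pointwise identity
\begin{equation}\label{eq:key}
v_{xx}=v\,\varphi_{x}^{2}+\frac{v_{x}\varphi_{xx}}{\varphi_{x}}-m_{0}\varphi_{x}^{2-b}.
\end{equation}

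Next I compute, using $\varphi_{t}=v$ and therefore $\varphi_{xxt}=v_{xx}$:
\begin{equation*}
\frac{d}{dt}\!\left(\frac{\varphi_{xx}}{\varphi_{x}}\right)=\frac{v_{xx}}{\varphi_{x}}-\frac{\varphi_{xx}v_{x}}{\varphi_{x}^{2}}.
\end{equation*}
Substituting \eqref{eq:key} the cross term cancels and one is left with
\begin{equation*}
\frac{d}{dt}\!\left(\frac{\varphi_{xx}}{\varphi_{x}}\right)=v\,\varphi_{x}-m_{0}\varphi_{x}^{1-b}.
\end{equation*}
Since $\varphi(0)=\id$, we have $\varphi_{xx}(0)/\varphi_{x}(0)=0$, and integrating in $t$ yields
\begin{equation*}
\frac{\varphi_{xx}(t)}{\varphi_{x}(t)}=\int_{0}^{t}v(s)\varphi_{x}(s)\,ds-m_{0}\int_{0}^{t}\varphi_{x}(s)^{1-b}\,ds,
\end{equation*}
which is exactly \eqref{eq:regularity_1}. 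Finally, inserting this expression for $\varphi_{xx}/\varphi_{x}$ into \eqref{eq:key} and factoring $\varphi_{x}$ out of the last two terms immediately gives \eqref{eq:regularity_2}.

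The only potential obstacle is ensuring enough regularity for these manipulations: one needs $v_{xx}$, $\varphi_{xx}$ and their time derivatives to exist pointwise, which is guaranteed by the hypothesis $(\varphi,v)\in \Diff{3}\times\C{3}$ and by the fact (used silently above) that $v_{xx}=(u_{xx}\circ\varphi)\varphi_{x}^{2}+(u_{x}\circ\varphi)\varphi_{xx}$ is a $C^{1}$ function of $t$. Apart from this, the argument is a purely algebraic rearrangement driven entirely by the conservation law.
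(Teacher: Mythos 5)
Your proof is correct and follows essentially the same route as the paper: both hinge on the conservation law $(m\circ\varphi)\varphi_{x}^{b}=m_{0}$ and on computing $\frac{d}{dt}\bigl(\varphi_{xx}/\varphi_{x}\bigr)=(u\circ\varphi)\varphi_{x}-m_{0}\varphi_{x}^{1-b}$, then integrating. The only cosmetic difference is that the paper obtains \eqref{eq:regularity_2} by differentiating \eqref{eq:regularity_1} in time, whereas you back-substitute into your pointwise identity for $v_{xx}$; these are equivalent.
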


\begin{proof}
We have
\begin{equation*}
    \frac{d}{dt}\left(\frac{\varphi_{xx}}{\varphi_{x}}\right) = (u_{xx} \circ \varphi) \varphi_{x}
\end{equation*}
since
\begin{equation*}
    \varphi_{xt} = (u_{x} \circ \varphi) \varphi_{x} \quad \text{and} \quad \varphi_{xxt} = (u_{xx} \circ \varphi) \varphi_{x}^{2} + (u_{x} \circ \varphi)\varphi_{xx}.
\end{equation*}
Using Proposition~\ref{prop:conservation_law}, we get
\begin{equation*}
    u_{xx} \circ \varphi = u \circ \varphi - m_{0}\varphi_{x}^{-b}
\end{equation*}
and hence
\begin{equation*}
    \frac{d}{dt}\left(\frac{\varphi_{xx}}{\varphi_{x}}\right) = (u \circ \varphi)\varphi_{x} - m_{0}\varphi_{x}^{1-b}.
\end{equation*}
Integrating this last relation on the interval $[0,t]$ leads to equation~\eqref{eq:regularity_1} and taking the time derivative of \eqref{eq:regularity_1} leads to equation~\eqref{eq:regularity_2}.
\end{proof}

The meaning of Lemma~\ref{lem:regularity} is that there can be ``neither loss, nor gain'' in spatial regularity: the solution at time $t>0$ has \emph{exactly the same regularity} as it has at time $t=0$.

\begin{corollary}\label{cor:regularity}
Let $n \ge 3$ and let $(\varphi,v)$ be a solution of
\eqref{eq:b-equation_geodesic} in $\Diff{3} \times \C{3}$ with initial data $(\id,\uo)$.
\begin{enumerate}
 \item If $\uo \in \C{n}$ then $(\varphi (t) , v(t)) \in  \Diff{n} \times \C{n}$ for all $t \in [0,T]$.
 \item If there exists $t \in (0,T]$ such that $\varphi (t) \in \Diff{n}$ or $v(t) \in \C{n}$ then $\uo \in \C{n}$.
\end{enumerate}
\end{corollary}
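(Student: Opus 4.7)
I propose a simultaneous induction on $n \ge 3$ for both assertions. The base case $n = 3$ is immediate: (1) is the standing hypothesis and (2) reduces to $u_0 = v(0) \in \C{3}$. The technical ingredients at the inductive step are Lemma~\ref{lem:regularity}, the Banach-algebra structure of each $\C{k}$, the positivity of $\varphi_x$ on $[0,T]\times\Circle$ (so that any power $\varphi_x^{\alpha}$ has the same regularity as $\varphi_x$), and the isomorphism $A^{-1}\colon \C{k-1}\to\C{k+1}$, which translates regularity of $m_0 = u_0 - u_{0,xx}$ into regularity of $u_0$.

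\textbf{Inductive step for (1).} Let $u_0 \in \C{k+1}$, so $m_0 \in \C{k-1}$. The inductive hypothesis supplies $\varphi_x(s), v(s) \in \C{k-1}$ uniformly in $s$, with smooth $s$-dependence by Proposition~\ref{prop:Cauchy_theorem}. Every factor on the right-hand side of \eqref{eq:regularity_1}--\eqref{eq:regularity_2}, including the time integrals, then lies in $\C{k-1}$, so $\varphi_{xx}(t), v_{xx}(t) \in \C{k-1}$, i.e.\ $(\varphi(t), v(t)) \in \Diff{k+1}\times\C{k+1}$.

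\textbf{Inductive step for (2).} By (2) at level $k$, $u_0 \in \C{k}$, and then (1) at level $k$ gives $(\varphi(s), v(s)) \in \Diff{k}\times\C{k}$ for every $s$. If $\varphi(t_0) \in \Diff{k+1}$, one solves \eqref{eq:regularity_1} for $m_0$,
$$
m_0 = \left(\int_0^{t_0}\varphi_x(s)^{1-b}\,ds\right)^{-1}\!\left[\int_0^{t_0} v(s)\varphi_x(s)\,ds - \frac{\varphi_{xx}(t_0)}{\varphi_x(t_0)}\right];
$$
the denominator is strictly positive (bounded below by $t_0\min\varphi_x^{1-b}$) and in $\C{k-1}$, and the bracket is in $\C{k-1}$ because $\varphi(t_0) \in \Diff{k+1}$, so $m_0 \in \C{k-1}$ and $u_0 \in \C{k+1}$. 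If instead $v(t_0) \in \C{k+1}$, one uses \eqref{eq:regularity_2}: the direct approach introduces the product $v_x(t_0)\varphi_{xx}(t_0)/\varphi_x(t_0)$, whose a priori regularity under the inductive hypothesis is only $\C{k-2}$, so one must substitute \eqref{eq:regularity_1} for $\varphi_{xx}(t_0)$ and absorb the excess into the coefficient of $m_0$, yielding
$$
m_0\,\bigl[\varphi_x(t_0)^{2-b} + v_x(t_0)\int_0^{t_0}\varphi_x(s)^{1-b}\,ds\bigr] = (\text{a term in }\C{k-1}),
$$
and then verify invertibility in $\C{k-1}$ of the bracketed coefficient. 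This invertibility check---straightforward for small $t_0$, where the second summand is a perturbation of the positive first summand, and requiring a continuity argument in $t$ for larger $t_0$---is the main obstacle and the technical heart of the proof.
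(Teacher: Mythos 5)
For assertion (1) and for the alternative $\varphi(t_0)\in\Diff{n}$ in assertion (2), your argument is the one the paper gives: bootstrap through Lemma~\ref{lem:regularity}, using that each $\C{k}$ is a Banach algebra and that $\varphi_x$ is strictly positive, so that
\begin{equation*}
\int_0^{t_0}\varphi_x(s)^{1-b}\,ds
\end{equation*}
is a strictly positive function of class $\C{k-1}$ whose reciprocal is again of class $\C{k-1}$; solving \eqref{eq:regularity_1} for $m_0$ then gives $m_0\in\C{k-1}$, hence $\uo=A^{-1}m_0\in\C{k+1}$. This part is correct and matches the paper.

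The gap is in the branch $v(t_0)\in\C{k+1}$ of assertion (2). You correctly reduce it to the invertibility in $\C{k-1}$ of
\begin{equation*}
c(t_0):=\varphi_x(t_0)^{2-b}+v_x(t_0)\int_0^{t_0}\varphi_x(s)^{1-b}\,ds,
\end{equation*}
call this ``the technical heart of the proof'', and then do not prove it. Unlike the coefficient appearing in the $\varphi$-branch, $c(t_0)$ has no a priori sign: $v_x(t_0)$ may be negative and large, so for a fixed $t_0$ the function $x\mapsto c(t_0)(x)$ may vanish at some points of $\Circle$, and at such points the relation $m_0\,c(t_0)\in\C{k-1}$ gives no information about $m_0$. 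Neither of your suggested remedies closes this: smallness of $t_0$ only covers an initial subinterval, whereas the claim is for every $t_0\in(0,T]$; and a ``continuity argument in $t$'' cannot help, because the obstruction is pointwise in $x$ at the fixed time $t_0$, not a connectedness issue in $t$. So as written the $v$-alternative of (2) is not established. (For what it is worth, the paper's own proof disposes of both alternatives with the single phrase ``since $\varphi_x$ is strictly positive'' and does not address the sign of $c(t_0)$ either, so you have surfaced a genuine point; but flagging the difficulty is not resolving it. A usable repair is to observe that only assertion (1) and the $\varphi$-alternative of (2) are actually invoked later, in Theorems~\ref{thm:short_time_existence} and~\ref{thm:exponential_map}, and that on the short interval $J$ and small ball $\norm{\uo}_{\C{3}}<\delta$ used there one can arrange $c(t)>0$ after shrinking $J$ and $\delta$, since $\varphi_x(t)^{2-b}$ stays near $1$ while the second summand is $O(t)$.)
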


\begin{proof}
1) Let $n \ge 3$. From Lemma~\ref{lem:regularity} we get that if $(\varphi (t) , v(t)) \in  \Diff{n} \times \C{n}$ for all $t \in [0,T]$ and $\uo \in \C{n+1}$ then $(\varphi (t) , v(t)) \in  \Diff{n+1} \times \C{n+1}$ for all $t \in [0,T]$. A recursive argument completes the proof of the first assertion of the corollary.

2) Let $n \ge 3$ and suppose that $\uo\in \C{n}$. Then, as we just verified, $(\varphi (t) , v(t)) \in  \Diff{n} \times \C{n}$ for all $t \in [0,T]$. If moreover there exists $t \in (0,T]$ such that $\varphi (t) \in \Diff{n+1}$ or $v(t) \in \C{n+1}$, then since $\varphi_x$ is strictly positive, we get from Lemma~\ref{lem:regularity} that $\uo\in \C{n+1}$. Again, a recursive argument completes the proof of the second assertion of the corollary.
\end{proof}

\begin{remark}
Proposition~\ref{prop:conservation_law}, Lemma~\ref{lem:regularity} and Corollary~\ref{cor:regularity} are of course true if we replace the time interval $[0,T]$ by $[-T,0]$.
\end{remark}

We are know able to state our main theorem.

\begin{theorem}\label{thm:short_time_existence}
There exists an open interval $J$ centered at $0$ and $\delta >0$ such that for each $\uo \in \CS$ with $\norm{\uo}_{\C{3}}< \delta$, there exists a unique solution $(\varphi , v) \in C^{\infty}\big(J,\DiffS \times \CS\big)$ of \eqref{eq:b-equation_geodesic} such that $\varphi(0) = \id$ and $v(0) = \uo$. Moreover, the flow $(\varphi , v)$ depends smoothly (in the smooth category) on $(t,\uo)$ in $J \times \CS$.
\end{theorem}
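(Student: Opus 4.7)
The plan is to combine the Banach-valued short-time existence from Proposition~\ref{prop:Cauchy_theorem} with the regularity preservation of Corollary~\ref{cor:regularity}, and then upgrade smooth dependence from the Banach scale $\Diff{n} \times \C{n}$ to the Fr\'{e}chet setting by means of Lemma~\ref{lem:Banach_approximation}.

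First I set $J := J_{3}$ and $\delta := \delta_{3}$ as furnished by Proposition~\ref{prop:Cauchy_theorem} with $n = 3$, and write $B := \set{\uo \in \CS : \norm{\uo}_{\C{3}} < \delta}$. For any $\uo \in B$ the proposition yields a unique solution $(\varphi, v) \in C^{\infty}(J, \Diff{3} \times \C{3})$ of~\eqref{eq:b-equation_geodesic} with $(\varphi(0), v(0)) = (\id, \uo)$, depending smoothly on $(t, \uo)$. A recursive application of Corollary~\ref{cor:regularity}(1) starting from $n = 3$ shows that, whenever $\uo \in \CS$, one has $(\varphi(t), v(t)) \in \Diff{n} \times \C{n}$ for every $n \ge 3$ and every $t \in J$, hence $(\varphi(t), v(t)) \in \DiffS \times \CS$ pointwise in $t$.

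To turn this pointwise statement into smoothness in the Fr\'{e}chet category, I appeal to Lemma~\ref{lem:Banach_approximation} applied to the Banach approximations $X_{n} = \Real \times \C{n}$ and $Y_{n} = \Diff{n} \times \C{n}$ of $X = \Real \times \CS$ and $Y = \DiffS \times \CS$, with $U_{0} = J \times B$ and $\Phi_{0}(t,\uo) = (\varphi(t), v(t))$. Property (1) of that lemma is exactly Corollary~\ref{cor:regularity}(1). For property (2), Theorem~\ref{thm:smoothness} guarantees that the second-order vector field $F$ on $\Diff{n} \times \C{n}$ is smooth, so the Cauchy--Lipschitz theorem in Banach spaces produces a smooth flow on its maximal interval of existence; Corollary~\ref{cor:regularity}(1) then ensures that, for every $\uo \in B \cap \C{n}$, this maximal interval contains $J$. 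Uniqueness already at the $\C{3}$ level identifies this $\C{n}$-flow with the restriction of $\Phi_{0}$, so $\Phi_{0}$ is smooth as a map into $\Diff{n} \times \C{n}$ on $J \times (B \cap \C{n})$. Lemma~\ref{lem:Banach_approximation} then delivers the claimed Fr\'{e}chet smoothness of $(t,\uo) \mapsto (\varphi(t), v(t))$ on $J \times (B \cap \CS)$; uniqueness is inherited from the $\Diff{3} \times \C{3}$ level.

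The delicate point is verifying property (2) \emph{uniformly in $n$}: the short-time flow obtained directly from Proposition~\ref{prop:Cauchy_theorem} at level $n$ lives only on $J_{n} \times B_{n}$, whose time interval and radius could a priori shrink as $n \to \infty$. The ``neither gain nor loss'' mechanism of Corollary~\ref{cor:regularity}, itself a consequence of the conservation law in Proposition~\ref{prop:conservation_law}, is exactly what allows one to keep the interval $J = J_{3}$ and the radius $\delta = \delta_{3}$ fixed across the entire scale $n \ge 3$, which is what makes the Banach-approximation argument go through.
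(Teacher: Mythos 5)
Your proposal is correct and follows essentially the same route as the paper: apply Proposition~\ref{prop:Cauchy_theorem} at the level $n=3$ to fix $J$ and $\delta$ once and for all, use Corollary~\ref{cor:regularity} to propagate $\C{n}$-regularity on that fixed interval for every $n\ge 3$, and invoke Lemma~\ref{lem:Banach_approximation} together with the smoothness of the restricted flows (from Theorem~\ref{thm:smoothness} and Cauchy--Lipschitz) to conclude in the Fr\'{e}chet category. Your closing remark on why $J$ and $\delta$ can be kept uniform across the scale makes explicit the point the paper leaves implicit, but the argument is the same.
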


\begin{proof}
By application of Proposition~\ref{prop:Cauchy_theorem} for $n=3$, we obtain the existence of an open interval  $J$ centered at $0$ and an open ball $U_{3} = B_{3}(0,\delta)$ in $\C{3}$ such that for each $\uo \in U_3$ there exists a unique solution $(\varphi , v) \in C^{\infty}\big(J,\Diff{3} \times \C{3}\big)$ of \eqref{eq:b-equation_geodesic} with initial data $(\id,\uo)$ and such that the flow
\begin{equation*}
    \Phi_{3}: J \times U_{3} \to \Diff{3} \times \C{3}
\end{equation*}
is smooth. Let $U_{n} = U_{3} \cap \C{n}$ and $U_{\infty} = U_{3} \cap \CS$.

By Corollary~\ref{cor:regularity}, we have
\begin{equation*}
    \Phi_{3}(J \times U_{n}) \subset \Diff{n} \times \C{n},
\end{equation*}
for each $n\ge 3$ and by Cauchy-Lipschitz theorem, the map
\begin{equation*}
    \Phi_{n} := \Phi_{3}\big\arrowvert_{J \times U_{n}} : J \times U_{n} \to \Diff{n} \times \C{n}
\end{equation*}
is smooth. Therefore, applying Lemma~\ref{lem:Banach_approximation}, we get first that
\begin{equation*}
    \Phi_{3}\left( J \times U_{\infty} \right) \subset \DiffS \times \CS
\end{equation*}
which shows the short time existence in the smooth category and then that the map
\begin{equation*}
    \Phi_{\infty} := \Phi_{3} \big\arrowvert_{J \times U_{\infty}} : J \times U_{\infty} \to \DiffS \times \CS
\end{equation*}
is smooth, which shows the smooth dependence upon initial condition in this category. This achieves the proof.
\end{proof}

In the smooth category, the map
\begin{equation*}
    \DiffS \times \CS \to \CS, \qquad (\varphi,v) \mapsto v \circ \varphi^{-1}
\end{equation*}
is smooth and we immediately get the following corollary.

\begin{corollary}\label{cor:short_time_existence_euler}
There exists an open interval $J$ centered at $0$ and $\delta >0$ such that for each $\uo \in \CS$ with $\norm{\uo}_{\C{3}}< \delta$, there exists a unique solution $u \in C^{\infty}\big(J,\CS\big)$ of \eqref{eq:b-equation} such that $u(0) = \uo$. Moreover, the solution $u$ depends smoothly (in the smooth category) on $(t,\uo)$ in $J \times \CS$.
\end{corollary}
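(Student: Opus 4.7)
The plan is to deduce this corollary directly from Theorem~\ref{thm:short_time_existence} by transporting the geodesic $(\varphi,v)$ back to Eulerian variables via $u = v\circ\varphi^{-1}$. First I would take the interval $J$ and radius $\delta>0$ furnished by Theorem~\ref{thm:short_time_existence}, and for $u_{0}\in\CS$ with $\norm{u_{0}}_{\C{3}}<\delta$ let $(\varphi,v)\in C^{\infty}(J,\DiffS\times\CS)$ be the unique solution of \eqref{eq:b-equation_geodesic} with $(\varphi(0),v(0))=(\id,u_{0})$, depending smoothly on $(t,u_{0})$.

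Next, I would invoke the fact recalled just before the corollary that in the smooth category the evaluation map $\DiffS\times\CS\to\CS$, $(\varphi,v)\mapsto v\circ\varphi^{-1}$, is smooth. Composing this map with $\Phi_{\infty}$ from the proof of Theorem~\ref{thm:short_time_existence} immediately yields that
\begin{equation*}
    u(t):= v(t)\circ\varphi(t)^{-1}
\end{equation*}
defines an element of $C^{\infty}(J,\CS)$ which depends smoothly on $(t,u_{0})\in J\times U_{\infty}$. By construction $u(0)=v(0)\circ\id=u_{0}$, and the equivalence already proved in the proposition preceding Theorem~\ref{thm:smoothness} guarantees that $u$ solves the Eulerian equation~\eqref{eq:b-equation_euler}, which is just a rewriting of~\eqref{eq:b-equation}.

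For uniqueness, I would argue in the reverse direction: if $\tilde u\in C^{\infty}(J',\CS)$ is any solution of~\eqref{eq:b-equation} with $\tilde u(0)=u_{0}$ on some subinterval $J'\subset J$ containing $0$, then its flow $\tilde\varphi$ exists in $\DiffS$ (smoothness of $\tilde u$ guarantees a smooth flow of class $C^\infty$), and setting $\tilde v=\tilde u\circ\tilde\varphi$ the same proposition shows that $(\tilde\varphi,\tilde v)$ solves~\eqref{eq:b-equation_geodesic} in $\Diff{3}\times\C{3}$ with initial data $(\id,u_{0})$. The uniqueness statement in Theorem~\ref{thm:short_time_existence} (or equivalently in Proposition~\ref{prop:Cauchy_theorem}) then forces $(\tilde\varphi,\tilde v)=(\varphi,v)$, whence $\tilde u=u$ on $J'$.

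No step here should be a genuine obstacle: every technical ingredient — smoothness of $(\varphi,v)$, smoothness of the composition/inversion map in the Fréchet category, and the Eulerian-Lagrangian equivalence — has already been established earlier in the paper, so this proof is essentially an assembly. The only mild point to be careful about is keeping track of the fact that smoothness of $(\varphi,v)\mapsto v\circ\varphi^{-1}$ is valid in the smooth category (it fails in $\Diff{n}$, as emphasized in Remark after Theorem~\ref{thm:smoothness}); this is why the argument must be run at the level of $\DiffS$ rather than $\Diff{n}$.
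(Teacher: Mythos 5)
Your proposal is correct and follows essentially the same route as the paper, which simply observes that $(\varphi,v)\mapsto v\circ\varphi^{-1}$ is smooth in the smooth category and derives the corollary immediately from Theorem~\ref{thm:short_time_existence}. Your added details on the Eulerian--Lagrangian equivalence and on uniqueness via the flow of $\tilde u$ are exactly the steps the paper leaves implicit.
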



\section{The exponential map}
\label{sec:exponential_map}

The geodesic flow of a symmetric linear connection on a Banach manifold $M$ (also called a \emph{spray} in \cite{Lan99}) satisfies the following remarkable property
\begin{equation*}
 \varphi(t, x_{0}, sv_{0}) = \varphi(st, x_{0}, v_{0}),
\end{equation*}
like in the Riemannian case. This is in fact a consequence of the quadratic nature of the geodesic equation. Therefore, the \emph{exponential map} $\e_{x_{0}}$, defined as the time one of the flow is well defined in a neighbourhood of $0$ in $T_{x_{0}}M$ for each point $x_{0}$. On a \emph{Banach manifold}, it can be shown moreover (see \cite{Lan99} for instance) that the differential of $\e_{x_{0}}$ at $0$ is equal to $\id$ and therefore that $\e_{x_{0}}$ is a local diffeomorphism from a neighbourhood of $0$ in $T_{x_{0}}M$ to a neighbourhood of $x_{0}$ in $M$. This privileged chart, called the \emph{normal chart} plays a very special role in classical differential geometry, especially to establish \emph{convexity results}.

On a \emph{Fr\'{e}chet manifold} and in particular on $\DiffS$, the existence of this privileged chart is far from being granted automatically. One may find useful to recall on this occasion that the \emph{group exponential} of $\DiffS$ is not a local diffeomorphism \cite{Mil84}. Moreover, the Riemannian exponential map for the $L^{2}$ metric (Burgers equation) on $\DiffS$ is not a local $C^{1}$-diffeomorphism near the origin \cite{CK02}.

However, it has been established in \cite{CK02}, that for the Camassa-Holm equation -- which corresponds to the Euler equation of the $H^{1}$ metric on $\DiffS$ -- and more generally for $H^k$ metrics ($k\ge 1$) (see \cite{CK03}), the Riemannian exponential map was in fact a smooth local diffeomorphism. We extend this result here for the general $b$-equation (a non metric Euler equation).

\begin{theorem}\label{thm:exponential_map}
The exponential map $\e $ at the unit element $\id$ for the $b$-equation on $\DiffS$ is a smooth local diffeomorphism from a neighborhood of zero in $\VectS$ to a neighborhood of $\id$ on $\DiffS$.
\end{theorem}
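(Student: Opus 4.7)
The plan is to reduce the Fréchet statement to a family of Banach statements via the Banach approximation scheme already set up in the previous section, and then to use Corollary~\ref{cor:regularity} to upgrade the conclusion to the smooth category.

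First I would argue at the Banach level. Proposition~\ref{prop:Cauchy_theorem} gives, for every $n \ge 3$, a smooth local flow on $\Diff{n} \times \C{n}$ for the second order vector field $F$. Since $F$ comes from a symmetric linear connection, the geodesic equation is quadratic in the velocities and the usual scaling identity $\varphi(t,\id,sv_{0}) = \varphi(st,\id,v_{0})$ holds, which lets us define the exponential map $\e_{n}(\uo) := \varphi(1, \id, \uo)$ on a neighbourhood of $0$ in $\C{n}$ by shrinking the initial velocity. By Theorem~\ref{thm:smoothness} and the Banach spray formalism (\cite{Lan99}), $\e_{n}$ is smooth and its differential at $0$ is the identity. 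The inverse function theorem in Banach spaces then yields neighbourhoods $U_{n} \subset \C{n}$ of $0$ and $V_{n} \subset \Diff{n}$ of $\id$ such that $\e_{n} : U_{n} \to V_{n}$ is a smooth diffeomorphism.

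Next I would glue these Banach statements together. Fix $n_{0} = 3$. Shrinking if necessary, arrange that $U_{n_{0}}$ is contained in the ball $B(0,\delta)$ of Theorem~\ref{thm:short_time_existence}, and define $U := U_{n_{0}} \cap \CS$ and $V := V_{n_{0}} \cap \DiffS$. Corollary~\ref{cor:regularity}(1) yields $\e_{n_{0}}(U) \subset \DiffS$, so $\e_{n_{0}}$ restricts to a map $\e : U \to V$; this restriction coincides with the time-one map of the smooth Fréchet flow of Theorem~\ref{thm:short_time_existence}, hence is smooth. Conversely, given $\varphi \in V$, set $\uo := \e_{n_{0}}^{-1}(\varphi) \in \C{n_{0}}$. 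By Corollary~\ref{cor:regularity}(2) applied inductively to $n = n_{0}, n_{0}+1, \dots$, the hypothesis $\varphi \in \DiffS \subset \Diff{n+1}$ forces $\uo \in \C{n+1}$ for every $n$, so $\uo \in \CS$. Therefore $\e : U \to V$ is a bijection.

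It remains to show that $\e^{-1}: V \to U$ is smooth as a map between Fréchet manifolds. For this I would invoke Lemma~\ref{lem:Banach_approximation} with $X_{n} = \C{n+n_{0}-3}$ (so $X = \CS$), $Y_{n}$ a corresponding Banach approximation of a chart around $\id$ in $\DiffS$ (obtained from the charts $\Phi_{j}$ of Section~\ref{sec:diffeomorphisms_group}), $\Phi_{0} = \e_{n_{0}}^{-1}$, and the shrunk open sets $V_{n} \cap Y_{n}$. Property (1) of the lemma is precisely the content of Corollary~\ref{cor:regularity}(2), and property (2) follows from the Banach inverse function theorem applied at each level $n$ to $\e_{n}$ (noting that $\e_{n}$ is the restriction of $\e_{n_{0}}$ to $U_{n_{0}} \cap \C{n}$, so their inverses coincide where both are defined). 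The lemma then delivers the smoothness of $\e^{-1}$, completing the proof.

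The main obstacle is the one absent from the Riemannian Banach setting: the inverse function theorem is unavailable on $\DiffS$, so constructing a smooth inverse directly is impossible. The resolution is the fact that equation~\eqref{eq:b-equation_geodesic} exhibits neither loss nor gain of spatial regularity (Lemma~\ref{lem:regularity} and Corollary~\ref{cor:regularity}); without this, one could invert $\e_{n_{0}}$ at finite regularity but could not control whether $\e_{n_{0}}^{-1}$ sends smooth diffeomorphisms to smooth vector fields.
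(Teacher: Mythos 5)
Your overall architecture (work at each Banach level $n$, use Corollary~\ref{cor:regularity} to show $\e_{3}$ restricts to a bijection between $U_{3}\cap\CS$ and $V_{3}\cap\DiffS$, then pass to the Fr\'{e}chet limit via Lemma~\ref{lem:Banach_approximation}) is the paper's strategy, and your first two paragraphs are essentially correct. But there is a genuine gap in the last step, exactly where you write that property (2) for $\Phi_{0}=\e_{3}^{-1}$ ``follows from the Banach inverse function theorem applied at each level $n$ to $\e_{n}$.'' To apply the inverse function theorem to $\e_{n}=\e_{3}\big\arrowvert_{U_{3}\cap\C{n}}$ on the \emph{fixed} neighborhood $U_{3}\cap\C{n}$ (which is what Lemma~\ref{lem:Banach_approximation} requires --- its hypotheses are phrased for a single $V_{0}$ intersected with each $Y_{n}$), you must know that $D\e_{n}(u)=D\e_{3}(u)\big\arrowvert_{\C{n}}$ is a topological isomorphism of $\C{n}$ at \emph{every} $u\in U_{3}\cap\C{n}$, not only at $u=0$. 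Injectivity is free, being the restriction of the injective operator $D\e_{3}(u)$; surjectivity onto $\C{n}$ is not: given $w\in\C{n}$, the unique $h\in\C{3}$ with $D\e_{3}(u)h=w$ has no a priori reason to lie in $\C{n}$. This is a regularity statement for the \emph{linearized} flow, and it is not a consequence of Corollary~\ref{cor:regularity}, which concerns the nonlinear flow. The alternative reading of your first paragraph --- apply the inverse function theorem at $0$ separately for each $n$ --- produces $n$-dependent neighborhoods $U_{n}$ that may shrink as $n\to\infty$; their intersection with $\CS$ need not be open in the Fr\'{e}chet topology, so this route does not yield a neighborhood of $0$ in $\VectS$ either.

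The missing ingredient is precisely the paper's Lemma~\ref{lem:linearized_regularity}: differentiating the no-loss-no-gain identity \eqref{eq:regularity_1} with respect to the initial data shows that the linearized solution operator $L_{n}(t,u)$ satisfies $L_{n}(t,u)(\C{n}\setminus\C{n+1})\subset\C{n}\setminus\C{n+1}$ for $u\in\C{n+1}$ and $t>0$. Its contrapositive, combined with $D\e_{n}(u)=L_{n}(1,u)$, gives inductively that $D\e_{3}(u)h\in\C{n+1}$ forces $h\in\C{n+1}$, i.e.\ the surjectivity of $D\e_{n}(u)$ at every point; the open mapping theorem and the inverse function theorem then make each $\e_{n}:U_{3}\cap\C{n}\to V_{3}\cap\Diff{n}$ a diffeomorphism, after which your appeal to Lemma~\ref{lem:Banach_approximation} goes through. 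So you correctly identified that ``no loss, no gain'' is the engine of the proof, but you applied it only at the nonlinear level to get bijectivity of $\e$; a second, linearized application is indispensable to get smoothness of $\e^{-1}$.
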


The proof of this theorem relies mainly on a linearized version of Lemma~\ref{lem:regularity}. Let $n \ge 3$ and $(\varphi^{\varepsilon}(t),v^{\varepsilon}(t))$ be the local expression of an integral curve in $T\Diff{n}$ of \eqref{eq:b-equation_geodesic}, defined on $[0,T]$, with initial data $(\id,u + \varepsilon w)$, where $u, w \in \C{n}$. We define
\begin{equation*}
    \psi(t) = \left.\frac{\partial}{\partial \varepsilon}\right|_{\varepsilon = 0} \varphi^{\varepsilon}(t) .
\end{equation*}
Then $\psi(t) = L_{n}(t,u)w$ where $L_{n}(t,u)$ is a bounded linear operator of $\C{n}$.

\begin{lemma}\label{lem:linearized_regularity}
Suppose $u \in \C{n+1}$. Then, we have
\begin{equation*}
    L_{n}(t,u)(\C{n} \backslash \C{n+1}) \subset \C{n} \backslash \C{n + 1})
\end{equation*}
for all $t \in(0,T]$,
\end{lemma}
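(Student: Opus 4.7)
The plan is to linearise the integral identity~\eqref{eq:regularity_1} with respect to a perturbation parameter $\varepsilon$, obtaining a formula for $\psi_{xx}(t)$ whose structure mirrors that of $\varphi_{xx}(t)$ in Lemma~\ref{lem:regularity}. All summands in the resulting expression turn out to be of regularity $\C{n-1}$, with the single exception of one term that is forced to lie in $\C{n-2}\setminus\C{n-1}$ by the hypothesis $w\notin\C{n+1}$. This pins down $\psi(t) = L_{n}(t,u)w$ in $\C{n}\setminus\C{n+1}$.

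More precisely, since $u\in\C{n+1}$ and $n\ge 3$, Corollary~\ref{cor:regularity} yields $(\varphi(t),v(t))\in\Diff{n+1}\times\C{n+1}$ for every $t\in[0,T]$, so $\varphi_{x}(t),v(t)\in\C{n}$ and $m_{0}=u-u_{xx}\in\C{n-1}$. The smoothness of the flow furnished by Proposition~\ref{prop:Cauchy_theorem} allows us to differentiate the $\varepsilon$-family of identities
\begin{equation*}
    \varphi^{\varepsilon}_{xx}(t) = \varphi^{\varepsilon}_{x}(t)\left[\int_{0}^{t}v^{\varepsilon}\varphi^{\varepsilon}_{x}\,ds - m_{0}^{\varepsilon}\int_{0}^{t}(\varphi^{\varepsilon}_{x})^{1-b}\,ds\right],\qquad m_{0}^{\varepsilon}=m_{0}+\varepsilon\tilde{m}_{0},
\end{equation*}
with $\tilde{m}_{0}:=w-w_{xx}$, at $\varepsilon=0$, obtaining
\begin{equation*}
    \psi_{xx}(t) = \psi_{x}(t)A(t) + \varphi_{x}(t)\left[\int_{0}^{t}\bigl(\chi\varphi_{x}+v\psi_{x}\bigr)\,ds \; -\; \tilde{m}_{0}\,B(t) \; -\; (1-b)m_{0}\,C(t)\right],
\end{equation*}
where $\chi:=\partial_{\varepsilon}v^{\varepsilon}|_{0}\in\C{n}$, $A(t)=\int_{0}^{t}v\varphi_{x}\,ds - m_{0}\int_{0}^{t}\varphi_{x}^{1-b}\,ds$, $B(t)=\int_{0}^{t}\varphi_{x}^{1-b}\,ds$ and $C(t)=\int_{0}^{t}\varphi_{x}^{-b}\psi_{x}\,ds$.

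Now read off regularities. A direct bookkeeping using that each $\C{k}$ is a Banach algebra gives $A,B,C\in\C{n-1}$ and shows that \emph{every} summand on the right above, except $-\varphi_{x}\tilde{m}_{0}B$, lies in $\C{n-1}$. On the other hand the hypothesis $w\in\C{n}\setminus\C{n+1}$ forces $\tilde{m}_{0}\in\C{n-2}\setminus\C{n-1}$. Since $\varphi_{x}>0$ pointwise on $\Circle$ and the integrand $\varphi_{x}^{1-b}$ is strictly positive and continuous in $s$, the function $\varphi_{x}(t)B(t)$ is a nowhere vanishing element of $\C{n}$ for every $t\in(0,T]$; multiplication by such a function is a topological isomorphism of $\C{k}$ for each $0\le k\le n$, so $\varphi_{x}(t)\tilde{m}_{0}B(t)\in\C{n-2}\setminus\C{n-1}$. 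As $\C{n-1}$ is a vector space, addition of the remaining $\C{n-1}$ contributions cannot cancel this roughness, hence $\psi_{xx}(t)\in\C{n-2}\setminus\C{n-1}$ and therefore $\psi(t)\in\C{n}\setminus\C{n+1}$, which is the claim.

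The main technical point is the legitimacy of differentiating~\eqref{eq:regularity_1} in $\varepsilon$; this rests on the smoothness (in the $\C{n}$-category) of the flow map $\Phi_{n}$ from Proposition~\ref{prop:Cauchy_theorem}, which in turn justifies commuting $\partial_{\varepsilon}$ through the time integrals. Beyond that, the proof is a regularity count guided by the observation that the only avenue through which the `roughness' of $w$ enters the formula for $\psi_{xx}(t)$ is via $\tilde{m}_{0}=w-w_{xx}$ multiplied by the strictly positive $\C{n}$ coefficient $\varphi_{x}(t)B(t)$.
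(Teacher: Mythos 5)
Your proof is correct and follows essentially the same route as the paper: differentiate the identity~\eqref{eq:regularity_1} in $\varepsilon$, observe that the roughness of $w$ enters $\psi_{xx}(t)$ only through the term $-c(t)(w-w_{xx})$ with $c(t)=\varphi_{x}(t)\int_{0}^{t}\varphi_{x}(s)^{1-b}\,ds>0$ for $t>0$, and conclude by a regularity count. The only difference is cosmetic: the paper leaves the coefficients as generic $\C{n-1}$ functions (and trades your $\chi=\psi_{t}$ term for $\psi(t)$ plus an integral via integration by parts in time), whereas you write them out explicitly.
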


\begin{proof}
Writing \eqref{eq:regularity_1} for $(\varphi^{\varepsilon}(t),v^{\varepsilon}(t))$ and taking the derivative with respect to $\varepsilon$ at $\varepsilon =0$, we get
\begin{multline*}
    \psi_{xx}(t) = a(t)\psi_{x}(t) + b(t) \psi(t) - c(t) (w - w_{xx}) \\
      + \int_{0}^{t}\alpha(t,s)\psi(s) \,ds + \int_{0}^{t}\beta(t,s)\psi_{x}(s) \,ds
\end{multline*}
where $a(t),b(t),c(t), \alpha(t,s), \beta(t,s)$ are in $\C{n-1}$ and $c(t) >0$ for $t > 0$. Therefore, if
\begin{equation*}
    w \in \C{n}\setminus \C{n+1}
\end{equation*}
then
\begin{equation*}
    \psi(t) = L_{n}(t,u)w \in \C{n}\setminus \C{n+1} , \qquad \forall t \in (0,T],
\end{equation*}
which achieves the proof of the lemma.
\end{proof}

\begin{proof}[Proof of Theorem~\ref{thm:exponential_map}]
First, we can find neighborhoods $U_{3}$ of $0$ in $\C{3}$ and $V_{3}$ of $\id$ in $\Diff{3}$ such that
\begin{equation*}
    \e_{3} : U_{3} \to V_{3}
\end{equation*}
is a smooth diffeomorphism (for the $C^{3}$ norm). For $n \ge 3$, let
\begin{equation*}
    U_{n} = U_{3} \cap \C{n} \quad \text{and} \quad V_{n} = V_{3} \cap \Diff{n}.
\end{equation*}
By Corollary~\ref{cor:regularity}, $ \e_{3}(U_{n})= V_{n}$ and
\begin{equation*}
    \e_{n}: = \e_{3} \big\arrowvert_{U_{n}} : U_{n} \to V_{n}
\end{equation*}
is a bijection. By virtue of the \emph{Cauchy-Lipschitz theorem} in Banach spaces, $\e_{n}$ is smooth (for the $C^{n}$ norm). We are going to show that it is a diffeomorphism. For each $u\in \C{n}$, $D\e_{n}(u)$ is a bounded linear operator of $\C{n}$. Notice that
\begin{equation*}
    D\e_{n}(u) = D\e_{3}(u)\big\arrowvert_{\C{n}}.
\end{equation*}
It is therefore \emph{one-to-one}. We now prove inductively that $D\e_{n}(u)$ is surjective. For $n=3$ this is so by our hypothesis. If it is true for $3 \le j \le n$, then it true also for $n+1$ (with $u\in \C{n+1}$) because of Lemma~\ref{lem:linearized_regularity} and the fact that $D\e_{n}(u) = L_{n}(1,u)$. Therefore, according to the \emph{open mapping theorem}, we get that for every $n\ge 3$ and each $u\in \C{n}$
\begin{equation*}
    D\e_{n}(u) : \C{n} \to \C{n}
\end{equation*}
is a topological linear isomorphism. Applying the \emph{inverse function theorem} for $\e_{n}$, we deduce that $\e_{n}$ is a diffeomorphism from $U_{n}$ to $V_{n}$. Since this is true for all $n \ge 3$, we conclude, using Lemma~\ref{lem:Banach_approximation} that
\begin{equation*}
    \e_{\infty} := \e_{3}\big\arrowvert_{U_{\infty}} : U_{\infty} \to V_{\infty}
\end{equation*}
as well as
\begin{equation*}
    \e_{\infty}^{-1} : V_{\infty} \to U_{\infty}
\end{equation*}
are smooth maps. That is $\e_{\infty}$ is a diffeomorphism (in the smooth category) between $U_{\infty}$ and $V_{\infty}$.
\end{proof}


\section{The non-periodic case}
\label{sec:non-periodic_case}

It would be interesting to extend the whole work done in this paper for the non-periodic case --- that is for the group $\DiffR$ of smooth orientation-preserving diffeomorphisms of the real line. Unfortunately and contrary to $\DiffS$, this \emph{Fr\'{e}chet Lie group} is not \emph{regular} as defined by Milnor in \cite{Mil84}. In particular, not every element of its Lie algebra, $\VectR$ can be integrated into a one parameter subgroup.

In order to extend straightforwardly the present work, one needs therefore to restrict to a subgroup of $\DiffR$ which is a \emph{regular Fr\'{e}chet Lie} group. Moreover, the Fr\'{e}chet space on which the differentiable structure is modeled must admit a \emph{Banach approximation}, as defined in Section~\ref{sec:short_time_existence}. And last but not least, these Banach spaces must be \emph{Banach algebras} (for pointwise multiplication of functions) in order to prove Theorem~\ref{thm:smoothness}.

One candidate was proposed in \cite{Mic06}. It consists of the subgroup of \emph{rapidly decreasing diffeomorphisms}
\begin{equation*}
    \mathrm{Diff}_{\mathcal{S}}(\Real) = \set{\id + f;\; f \in \mathcal{S}(\Real) \; \text{and} \; f^{\prime} > -1}
\end{equation*}
where $\mathcal{S}(\Real)$ is the \emph{Schwartz space} of rapidly decreasing functions. This group is a regular Fr\'{e}chet Lie group (see \cite{Mic06}).

Another and simpler solution has been proposed in \cite{DH09}. It consists of the subgroup of $\DiffR$, defined as
\begin{equation*}
    \mathrm{Diff}_{H^{\infty}}(\Real) = \set{\id + f;\; f \in H^{\infty}(\Real) \; \text{and} \; f^{\prime} > -1}
\end{equation*}
where
\begin{equation*}
    H^{\infty}(\Real) = \bigcap_{n=1}^{+ \infty}H^{n}(\Real)
\end{equation*}
and $H^{n}(\Real)$ are the Sobolev spaces on the line. It has been shown in \cite{DH09} that this group is a regular Fr\'{e}chet Lie group and that the theory extends well, at least in the metric case. Notice that
\begin{equation*}
    \mathrm{Diff}_{\mathcal{S}}(\Real) \subset \mathrm{Diff}_{H^{\infty}}(\Real)
\end{equation*}
and that this inclusion is strict.



\begin{thebibliography}{10}

\bibitem{Arn66}
V.~I. Arnold, \emph{Sur la g{\'e}om{\'e}trie diff{\'e}rentielle des groupes de
  {L}ie de dimension infinie et ses applications {\`a} l'hydrodynamique des
  fluides parfaits}, Ann. Inst. Fourier (Grenoble) \textbf{16} (1966),
  no.~fasc. 1, 319--361. \MR{MR0202082 (34 \#1956)}

\bibitem{AK98}
V.~I. Arnold and B.~Khesin, \emph{Topological methods in hydrodynamics},
  Applied Mathematical Sciences, vol. 125, Springer-Verlag, New York, 1998.
  \MR{MR1612569 (99b:58002)}

\bibitem{CH93}
R.~Camassa and D.~D. Holm, \emph{An integrable shallow water equation with
  peaked solitons}, Phys. Rev. Lett. \textbf{71} (1993), no.~11, 1661--1664.
  \MR{94f:35121}

\bibitem{CE98b}
A.~Constantin and J.~Escher, \emph{Wave breaking for nonlinear nonlocal shallow
  water equations}, Acta Math. \textbf{181} (1998), no.~2, 229--243.
  \MR{MR1668586 (2000b:35206)}

\bibitem{CE98}
\bysame, \emph{Well-posedness, global existence, and blowup phenomena for a
  periodic quasi-linear hyperbolic equation}, Comm. Pure Appl. Math.
  \textbf{51} (1998), no.~5, 475--504. \MR{MR1604278 (98k:35165)}

\bibitem{CGI06}
A.~Constantin, V.~S. Gerdjikov, and R.~I. Ivanov, \emph{Inverse scattering
  transform for the {C}amassa-{H}olm equation}, Inverse Problems \textbf{22}
  (2006), no.~6, 2197--2207. \MR{MR2277537 (2007j:37119)}

\bibitem{CK02}
A.~Constantin and B.~Kolev, \emph{On the geometric approach to the motion of
  inertial mechanical systems}, J. Phys. A \textbf{35} (2002), no.~32,
  R51--R79. \MR{MR1930889 (2003g:37138)}

\bibitem{CK03}
\bysame, \emph{Geodesic flow on the diffeomorphism group of the circle},
  Comment. Math. Helv. \textbf{78} (2003), no.~4, 787--804. \MR{MR2016696
  (2004k:37163)}

\bibitem{CL09}
A.~Constantin and D.~Lannes, \emph{The hydrodynamical relevance of the
  {C}amassa-{H}olm and {D}egasperis-{P}rocesi equations}, Arch. Ration. Mech.
  Anal. \textbf{192} (2009), no.~1, 165--186. \MR{MR2481064}

\bibitem{DHH02}
A.~Degasperis, D.~D. Holm, and A.~N.~I. Hone, \emph{A new integrable equation
  with peakon solutions}, Teoret. Mat. Fiz. \textbf{133} (2002), no.~2,
  170--183.

\bibitem{DP99}
A.~Degasperis and M.~Procesi, \emph{Asymptotic integrability}, Symmetry and
  perturbation theory ({R}ome, 1998), World Sci. Publ., River Edge, NJ, 1999,
  pp.~23--37. \MR{MR1844104 (2002f:37112)}

\bibitem{DH09}
S.~Djebali and N.~Hermas, \emph{Existence de g{\'e}od{\'e}siques d'un groupe de
  diff{\'e}omorphismes muni d'une m{\'e}trique de {S}obolev}, To appear in
  African Diaspora Journal of Mathematics, 2009.

\bibitem{EM70}
D.~G. Ebin and J.~Marsden, \emph{Groups of diffeomorphisms and the notion of an
  incompressible fluid}, Ann. of Math. (2) \textbf{92} (1970), 102--163.
  \MR{MR0271984 (42 \#6865)}

\bibitem{E07}
J.~Escher, \emph{Wave breaking and shock waves for a periodic shallow water
  equation}, Philos. Trans. R. Soc. Lond. Ser. A Math. Phys. Eng. Sci.
  \textbf{365} (2007), no.~1858, 2281--2289. \MR{MR2329148 (2008i:76021)}

\bibitem{ELY06b}
J.~Escher, Y.~Liu, and Z.~Yin, \emph{Shock waves and blow-up phenomena for the
  periodic {D}egasperis-{P}rocesi equation}, Indiana Univ. Math. J. \textbf{56}
  (2007), no.~1, 87--117. \MR{MR2305931 (2008j:35155)}

\bibitem{EY08}
J.~Escher and Z.~Yin, \emph{Well-posedness, blow-up phenomena, and global
  solutions for the {$b$}-equation}, J. Reine Angew. Math. \textbf{624} (2008),
  51--80. \MR{MR2456624}

\bibitem{GR05}
L.~Guieu and C.~Roger, \emph{Alg{\`e}bre de {V}irasoro: aspects
  g{\'e}om{\'e}triques et alg{\'e}briques}, Soc. Math. France, 2005.
  \MR{99m:53001}

\bibitem{Ham82}
R.~S. Hamilton, \emph{The inverse function theorem of {N}ash and {M}oser},
  Bull. Amer. Math. Soc. (N.S.) \textbf{7} (1982), no.~1, 65--222. \MR{MR656198
  (83j:58014)}

\bibitem{HW03}
A.~N.~W. Hone and J.~P. Wang, \emph{Prolongation algebras and {H}amiltonian
  operators for peakon equations}, Inverse Problems \textbf{19} (2003), no.~1,
  129--145. \MR{MR1964254 (2004a:37090)}

\bibitem{Iva05}
R.~I. Ivanov, \emph{On the integrability of a class of nonlinear dispersive
  waves equations}, J. Nonlinear Math. Phys. \textbf{12} (2005), no.~4,
  462--468. \MR{MR2171998 (2007c:35168)}

\bibitem{Iva07}
\bysame, \emph{Water waves and integrability}, Philos. Trans. R. Soc. Lond.
  Ser. A Math. Phys. Eng. Sci. \textbf{365} (2007), no.~1858, 2267--2280.
  \MR{MR2329147 (2008e:76017)}

\bibitem{Joh02}
R.~S. Johnson, \emph{Camassa-{H}olm, {K}orteweg-de {V}ries and related models
  for water waves}, J. Fluid Mech. \textbf{455} (2002), 63--82. \MR{MR1894796
  (2003b:76026)}

\bibitem{Joh03}
\bysame, \emph{The classical problem of water waves: a reservoir of integrable
  and nearly-integrable equations}, J. Nonlinear Math. Phys. \textbf{10}
  (2003), no.~suppl. 1, 72--92. \MR{MR2063546 (2005c:76018)}

\bibitem{Kol09}
B.~Kolev, \emph{Some geometric investigations on the degasperis-procesi shallow
  water equation}, To appear in Wave Motion, 2009.

\bibitem{Kou99}
S.~Kouranbaeva, \emph{The {C}amassa-{H}olm equation as a geodesic flow on the
  diffeomorphism group}, J. Math. Phys. \textbf{40} (1999), no.~2, 857--868.
  \MR{MR1674267 (2000a:37078)}

\bibitem{Lan99}
S.~Lang, \emph{Fundamentals of differential geometry}, Graduate Texts in
  Mathematics, vol. 191, Springer-Verlag, New York, 1999. \MR{MR1666820
  (99m:53001)}

\bibitem{Mic06}
P.W. Michor, \emph{Some geometric evolution equations arising as geodesic
  equations on groups of diffeomorphisms including the {H}amiltonian approach},
  Phase space analysis of partial differential equations, Progr. Nonlinear
  Differential Equations Appl., vol.~69, Birkh\"auser Boston, Boston, MA, 2006,
  pp.~133--215. \MR{MR2263211}

\bibitem{MN02}
A.~V. Mikhailov and V.~S. Novikov, \emph{Perturbative symmetry approach}, J.
  Phys. A \textbf{35} (2002), no.~22, 4775--4790. \MR{MR1908645 (2004d:35012)}

\bibitem{Mil84}
J.~Milnor, \emph{Remarks on infinite-dimensional {L}ie groups}, Relativity,
  groups and topology, II (Les Houches, 1983), North-Holland, Amsterdam, 1984,
  pp.~1007--1057. \MR{MR830252 (87g:22024)}

\bibitem{Mor59}
J.~J. Moreau, \emph{Une m{\'e}thode de ``cin{\'e}matique fonctionnelle'' en
  hydrodynamique}, C. R. Acad. Sci. Paris \textbf{249} (1959), 2156--2158.
  \MR{MR0122197 (22 \#12923)}

\bibitem{RB01}
G.~Rodr{\'{\i}}guez-Blanco, \emph{On the {C}auchy problem for the
  {C}amassa-{H}olm equation}, Nonlinear Anal. \textbf{46} (2001), no.~3, Ser.
  A: Theory Methods, 309--327. \MR{MR1851854 (2002i:35172)}

\end{thebibliography}
\end{document}